\newcommand{\remark}[1]{{\emph{Remark:}} #1}
\newcommand{\BEQA}{\begin{eqnarray}}
\newcommand{\EEQA}{\end{eqnarray}}
\newcommand{\prob}{\mathsf{Pr}}
\newcommand{\gap}{\vspace{2mm}}
\newcommand{\comment}{\textbf{Comment: }}
\newcommand{\X}{\mathbf{X}}
\newcommand{\Y}{\mathbf{Y}}
\newcommand{\0}{\mathbf{0}}
\newtheorem{theorem}{Theorem}
\newtheorem{proposition}{Proposition}
\begin{document}

\title{A Fast and Accurate Performance Analysis of Beaconless \mbox{IEEE 802.15.4} Multi-Hop Networks}
\author{Rachit Srivastava, Sanjay Motilal Ladwa, Abhijit Bhattacharya and Anurag Kumar\\Dept.\ of Electrical Communication Engineering\\Indian Institute of Science, Bangalore, 560012, India\\ email: \{rachitsri, sanjofpesit\}@gmail.com, \{abhijit, anurag\}@ece.iisc.ernet.in}

%\IEEEauthorblockN{Rachit Srivastava}
		%\IEEEauthorblockA{Deptt. of Electrical Communication Engineering \\
		%				Indian Institute of Science \\
		%				Bangalore, India 560012\\
		%				\{, abhibhattacharya2007, anurageceiisc\}@gmail.com}
%						rachitsri@gmail.com}
%\and
%		\IEEEauthorblockN{Anurag Kumar}
%		\IEEEauthorblockA{Deptt. of Electrical Communication Engineering \\
%						Indian Institute of Science\\
%						Bangalore, India 560012 \\
%						anurag@ece.iisc.ernet.in}

\maketitle
\begin{abstract}
We develop an approximate analytical technique for evaluating the performance of multi-hop networks based on beaconless \mbox{IEEE 802.15.4}, a popular standard for  wireless sensor networks. The network comprises sensor nodes, which generate measurement packets, relay nodes which only forward packets, and a data sink (base station). We consider a detailed stochastic process at each node, and analyse this process taking into account the interaction with neighbouring nodes via certain time averaged unknown variables (e.g., channel sensing rates, collision probabilities, etc.). By coupling the analyses at various nodes, we obtain fixed point equations that can be solved numerically to obtain the unknown variables, thereby yielding approximations of time average performance measures, such as packet discard probabilities and average queueing delays. The model incorporates packet generation at the sensor nodes and queues at the sensor nodes and relay nodes. We demonstrate the accuracy of our model by an extensive comparison with simulations. As an additional assessment of the accuracy of the model, we utilize it in an algorithm for sensor network design with quality-of-service (QoS) objectives, and show that designs obtained using our model actually satisfy the QoS constraints (as validated by simulating the networks), and the predictions are accurate to well within 10\% as compared to the simulation results.
\end{abstract}

%---------------------------Introduction--------------------------------------%

\section{Introduction} \label{sec:introduction}
Wireless sensor networks (WSNs), a concept that originated in the mid-1990s, have now reached a stage in their evolution that we are beginning to see their actual deployment. It is not unreasonable to expect that in 10-15 years the world will be covered with wireless sensor networks with access to them via the Internet\cite{wireless-sensor-network}. \mbox{IEEE 802.15.4} \cite{IEEE802-15-4-06std} is a popular standard for the physical layer and medium access control for low-power wireless sensor networks. With the growing importance of wireless sensor networks in industrial applications \cite{gungor-hancke09industrial-WSN-survey}, we need analysis and design techniques for multi-hop IEEE~802.15.4 networks. The standard uses CSMA/CA for medium access control, and defines two types of CSMA/CA algorithms - beaconed (or slotted) and beaconless (or unslotted), respectively (see \cite{IEEE802-15-4-06std} for details). In this
paper we develop a new approximate analytical technique for multi-hop beaconless \mbox{IEEE 802.15.4} networks, and demonstrate the usefulness of such an analytical tool in designing multi-hop networks with Quality of Service (QoS) objectives.

\begin{figure}
\begin{center}
  \includegraphics[scale=0.31]{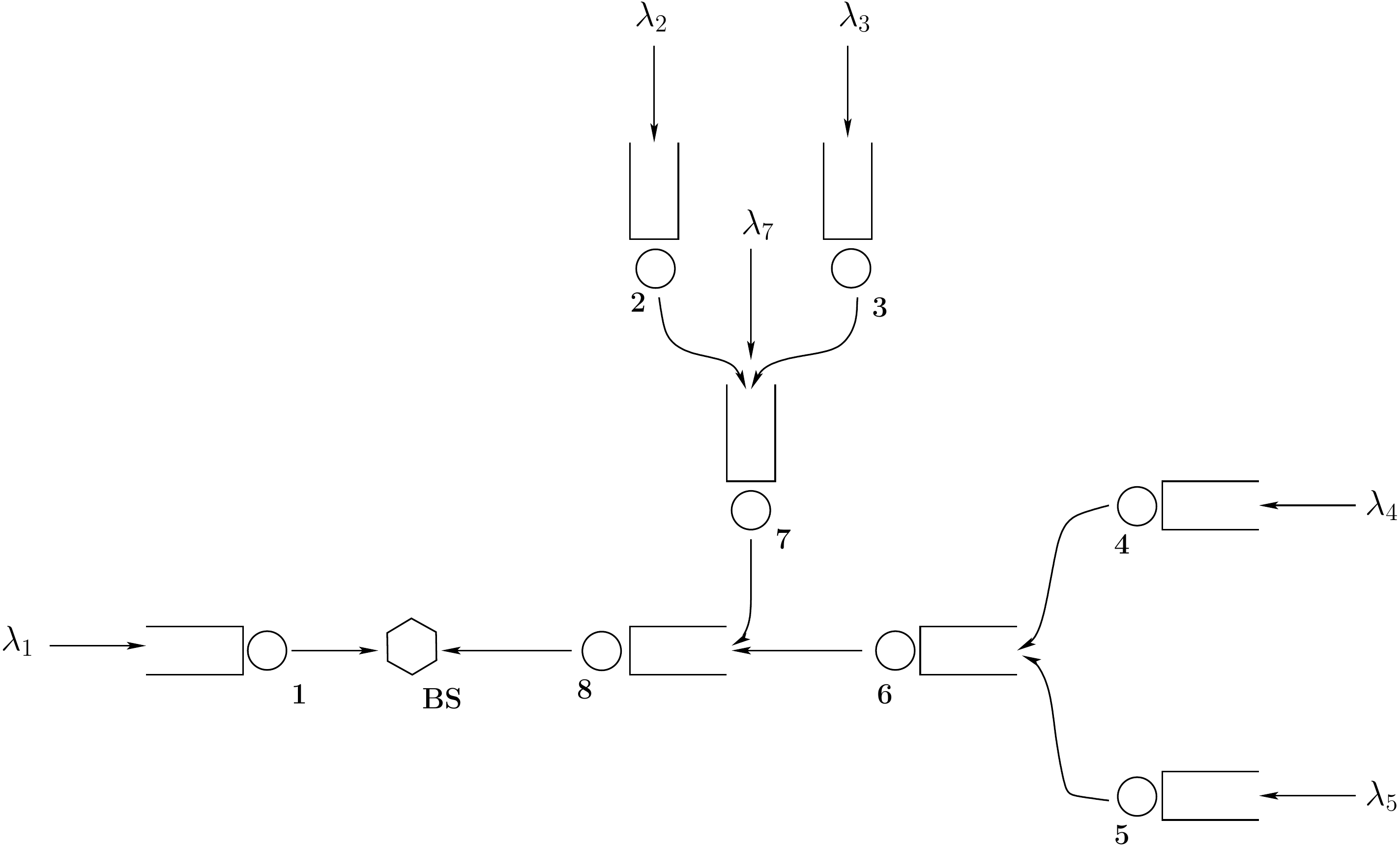}
  \caption{A WSN consisting of nodes arranged in a tree topology. Source node $i$ has packet generation rate $\lambda_i$. Relay nodes $6$, $7$ and $8$ forward the data of their predecessors to the base-station, BS.}
  \label{fig:network}
  \vspace{-7mm}
\end{center}
\end{figure}

Figure~\ref{fig:network} depicts the queueing schematic of a network of the type that we are concerned with in this paper. There are sensor nodes, namely, 1, 2, 3, 4, 5 and 7, that generate measurement packets according to independent point processes at rate $\lambda_i$ that need to be transported to the
base station (BS).  The sensor nodes can also serve as \emph{relays}
for the traffic of other sensor nodes, e.g., Node~7 is a relay for
nodes 2 and 3.  Two additional nodes, 6 and 8, serve only as relay
nodes, due to the limited range of the radios associated with the
sensor nodes. The nodes use unslotted CSMA/CA to contend for the wireless medium and transmit their packets. Even the stability analysis of such networks is a difficult problem (even for networks with single hop traffic; see \cite{bordenave} for a detailed discussion). In
principle, the entire network can be modeled via a large coupled
Markov chain, with the state at each node being the number of packets
in its queue, and the state of the contention process for the
head-of-the-line (HOL) packet. Such an approach is well known to be
intractable even for a network with a single contention domain (no hidden network) and saturated queues (see \cite{bianchi00performance,kumar-etal04new-insights}). Thus
all researchers have taken recourse to developing approximate
analyses. Table~\ref{tbl:related-literature} summarizes some of the recent work on modeling approaches for beaconless IEEE~802.15.4 networks where we have listed the major limitations of the proposed models as well. Owing to the need for low power operation and large coverage, in general, a sensor network is multi-hop with the presence of hidden nodes. Further, for networks that carry measurement traffic, the queue occupancies vary with time. Unlike other models, we consider all of these attributes in our modeling approach.

\scriptsize
\begin{table*}[p]
  \centering
  \caption{A comparison of the literature on analytical performance models of CSMA/CA based networks. The bold text in the "Network Scenario and Limitations" columns highlights the limitations of the model that we are able to handle in our model.}
  \begin{tabular}{|p{23mm}|p{45mm}|p{100mm}|}
   \hline
   \textbf{Authors}  &  \textbf{Network Scenario \& Limitations}  & \textbf{Modeling Approach}  \\
   \hline
   Kim et al. (2006)\cite{kim-etal06unslotted-CSMACA}
   	& 
	beaconless 802.15.4; \textbf{rare packet arrivals, star network; no hidden nodes; no queueing at nodes}; with ACKs
   	& 
	2-dim discrete time Markov chain for each node with states: (backoff stage, backoff counter) while in backoff, and separate states when transmitting and idling; with unknown transition probabilities; coupling the per node models to get the unknowns via fixed point approach
   \\ \hline
   Qiu et al. (2007)\cite{winet.qiu-etal07general-model-wireless-interference} 
   & 
   802.11 DCF; (un)saturated, \textbf{star network}; with hidden nodes and ACKs; infinite buffer space
   & 
   discrete time Markov chain with states as the set of transmitting nodes having unknown transition probabilities, together with RSS measurements between each pair of node to obtain CCA and packet failure probabilities; iterative procedure for solving unknowns
   \\ \hline
	Kim et al. (2008)\cite{kim-etal08modeling-802.15.4-single-hop}
    &
    beaconless 802.15.4; \textbf{star topology; no hidden nodes; no collisions}; infinite buffer space
    &
    unsaturated analysis of single hop network by simple mathematical model of M/G/1     queue; no collision is considered; computes QoS as packet delay and packet loss     probability; computes battery life time
   \\ \hline
Singh et al. (2008)\cite{winet.singh-etal08slotted-zigbee-star}
   &
   beaconed 802.15.4; (un)saturated, \textbf{star network; no hidden nodes}; with ACKs; infinite buffer; \textbf{equal traffic arrival rate at all nodes}
   &
   Markov renewal process at each node with cycle length dependent on the number of nodes available to attempt in the first backoff period of the cycle; transition probabilities and conditional expectations using node decoupling; coupling via stationary probabilites of remaining nodes; saturation analysis results used for the unsaturated case
   \\ \hline
    Buratti-Verdone (2009)\cite{buratti-verdone09modeling-802.15.4-beaconless}
    &
    beaconless 802.15.4; \textbf{star topology; no hidden nodes; no ACKs and retransmissions; no 
    queuing}
    &
   bidimensional process with backoff counter and backoff stage at time $t$; evaluation of sensing, transmission and success probability; computes mean energy spent by each node for a transmission 
   \\ \hline
   
   He et al. (2009)\cite{winet.he-etal09slotted-CSMACA} &  \textbf{beaconed} 802.15.4; \textbf{saturated, star network; no hidden nodes}; with ACKs;
   & 
   two Markov chains for each node: one embedded at the end of transmissions in the common channel, and the other during backoffs embedded at each slot; both having states as (backoff stage, backoff counter); chains coupled to get a fixed point approach
   \\ \hline
   Martalo et al. (2009)\cite{winet.martalo-etal09slotted-CSMACA}  
   &
   \textbf{beaconed} 802.15.4; unsaturated, \textbf{star network; no hidden nodes; no ACKs; finite buffer space}
   &
   two semi-Markov processes: one for tagged node having unknown transition probabilities, and the other for the shared radio channel that couples all the per node models to obtain the unknown quantities; MGFs used for finding queueing delays
   \\ \hline
   Goyal et al. (2009)\cite{goyal-etal09beaconless-zigbee} 
   & 
   beaconless 802.15.4; \textbf{unsaturated star network; no hidden nodes; no queueing at nodes}; with ACKs 
   &
   tracked the number of nodes with non-empty queues assuming that the probability of $m$ nodes having non-empty queues at any given time is same as the probability of $m-1$ empty nodes getting a new packet to send while a non-empty node is sending its current packet (for coupling individual node models); assumed that the CCA attempt processes of other nodes with non-empty queues are Poisson processes (decoupling)
   \\ \hline   
   Lauwens et al. (2009)\cite{lauwens-etal09unslotted-CSMACA-analysis} 
   & 
   beaconless 802.15.4; \textbf{saturated, star network; no hidden nodes; no ACKs}
   & 
   semi-Markov model for each node with unknown and non-homogeneous transition probabilities based on the backoff stage of the node; also found the distribution of backoff intervals and iterated over these together with the stationary probabilities of the semi-Markov processes
   \\ \hline
   Jindal-Psounis (2009)\cite{jindal09}
   &
   802.11 DCF; unsaturated, multi-hop network; with hidden nodes and ACKs; infinite buffer space; \textbf{modeling involves details that are specific to 802.11 such as the RTS/CTS mechanism};   
   &
   For each of four possible classes of two-edge topologies, derived collision and idle probabilities for each edge in terms of expected service times of the edges; for a general topology, decomposed the local network topology around each edge into a number of two-edge topologies, and derived the collision and idle probabilities for the edge in terms of those of the two-edge topologies in the decomposition; using these probabilities, set up and solved a Markov chain model for each edge (with states as the current backoff window, backoff counter, and time since the last successful/unsuccessful RTS/CTS exchange) to express the expected service times of the edges via a set of fixed point equations, which were solved iteratively to find the achievable rate region under 802.11 in the given topology
  \\ \hline
   DiMarco et al. (2010)\cite{dimarco-etal10modeling-ieee-802-15-4-multihop} 
   & 
   beaconless 802.15.4; unsaturated, multi-hop network; \textbf{no queueing at nodes}; with hidden nodes and ACKs 
   & 
   3-dim discrete time Markov chain for each node having state: (backoff stage, backoff counter, retransmission counter) with CCA and collision probabilities (unknown) as state transition probabilities (decoupling); assumed independence of node processes to find expressions for the CCA and collision probabilities using stationary probabilities of different Markov chains (coupling) 
   \\ \hline
   Sen-De (2010)\cite{sen-de10modelling-802.15.4-beaconless}
   &
   beaconless 802.15.4; unsaturated, multi-hop network; with hidden nodes; \textbf{no queuing; analysis does not match simulation well}
   &
   1-dim Markov model for transmitting node with states being in CCA, backoffs states and data transmission; steady state transmission probability and probability of success (throughput) are obtained through fixed point iterations
   \\ \hline
   Shyam-Kumar (2010)\cite{shyam-kumar10me-thesis} 
   & 
   beaconless 802.15.4; (un)saturated, multi-hop network; with hidden nodes and ACKs; infinite buffer; \textbf{analysis does not match simulation well} 
   & 
   saturated analysis using continuous time Markov chain with states as the set of transmitting nodes having unknown transition probabilities; found CCA and failure probabilities using steady state probability of the Markov chain; unsaturated network modeled as a Markov renewal process constituted by the nodes having non-empty queues, saturation analysis results used in each cycle with the set of non-empty nodes
   \\ \hline
  Marbach et al. (2011)\cite{marbach11}
  &
  \textbf{non-adaptive CSMA characterized by a \emph{fixed} attempt probability on each link}, and a non-zero sensing period on each link; asynchronous, unsaturated, multihop network; \textbf{over-simplified model - does not conform to any existing standard}
  &
  intuitively formulated fixed point equations involving node idling probabilities, and transmission attempt rates; showed uniqueness of the fixed point, and asymptotic accuracy for large networks with small sensing period and appropriately decreasing link attempt probabilities; for a given sensing period, characterized the achievable rate region of the simplified CSMA policy by defining a set of arrival rate vectors, and showing that for every member in that set, there exists a policy (a vector of attempt probabilities) such that the link service rates obtained from the fixed point equations exceed the link arrival rates\\
\hline
  \end{tabular}
  \label{tbl:related-literature}
  \vspace{-5mm}
\end{table*}
\normalsize

\subsection*{Our Contributions and Comparison with Related Work}  

We consider a multi-hop WSN consisting of static sources and relays arranged in a tree topology operating in the beacon-less mode with acknowledgements (ACKs). Each node has an infinite buffer space and may operate in the saturated or unsaturated regime with fixed packet length. Different analysis techniques are developed for networks with hidden nodes and networks without hidden nodes due to the difference in activity lengths perceived by a node in the presence and absence of hidden nodes. Under certain approximations, we model the stochastic process evolving at a node by incorporating the influence of the other nodes in the network by their (unknown) time averaged statistics, and then couple these individual node processes via a system of fixed point equations, which is solved using an iterative scheme to obtain the unknown variables. Although this \emph{decoupling} (or \emph{mean-field}) approximation is popular in such situations, our more detailed model incorporating several issues not considered together hitherto requires a careful handling of the analysis. We identify and calculate two QoS measures for each source node, viz., the packet delivery probability and end-to-end packet delay, in terms of these variables. We observe that in a multi-hop IEEE~802.15.4 network, the packet delivery probability falls sharply before the end-to-end packet delay becomes substantial.
 
The rest of the paper is organized as follows: In Section~\ref{sec:overview-of-standard}, we give an overview of the
unslotted CSMA/CA mechanism. Section~\ref{sec:analytical-model}
explains the node behaviour and elaborates upon the different
analysis techniques used for networks without hidden nodes and
networks with hidden nodes. Section~\ref{sec:numerical-and-simulation-results} compares our analysis with simulations. Section~\ref{sec:network-design} demonstrates how the analysis can be used for designing multi-hop networks with QoS objectives. Finally, we conclude the paper in
Section~\ref{sec:conclusion}.

%-----------Overview of standard------------------------------------%

\section{Unslotted CSMA/CA for Beacon-less IEEE 802.15.4 Networks}\label{sec:overview-of-standard}

A typical node behaviour under unslotted CSMA/CA is shown in Figure~\ref{fig:evolution}. A node with an empty queue remains idle until it generates a packet or receives one from its predecessor nodes. When a node has data to send (i.e., has a non-empty queue), it initiates a random back-off with the first back-off period being sampled uniformly from $0$ to \emph{${2^{macminBE}-1}$}, where \emph{macminBE} is a parameter fixed by the standard. For each node, the back-off period is specified in terms of slots where a slot equals 20 symbol times ($T_{\mathsf{s}}$), and a symbol time equals 16 $\mu s$.\footnote{Note, however, that there is no central coordinator that maintains synchrony of the slots \emph{across the nodes}, and hence the CSMA/CA protocol is unslotted.} The node then performs a CCA (\emph{Clear Channel Assessment}) to determine whether the channel is idle. If the CCA succeeds, the node does a \emph{Rx-to-Tx turnaround}, which is $12$ symbol times, and starts transmitting on the channel. The failure of the CCA starts a new back-off process with the back-off exponent raised by one, i.e., to \emph{macminBE+1}, provided that this is less than its maximum value, \emph{macmaxBE}. The maximum number of successive CCA failures for the same packet is governed by \emph{macMaxCSMABackoffs}, exceeding which the packet is discarded at the MAC layer. The standard allows the inclusion of acknowledgements (ACKs) which are sent by the intended receivers on a successful packet reception. Once the packet is received, the receiver performs a \emph{Rx-to-Tx turnaround}, which is again $12$ symbol times, and sends a $22$ symbol fixed size ACK packet. A successful transmission is followed by an \emph{InterFrame Spacing}(IFS) before sending another packet.

When a transmitted packet collides or is corrupted by the PHY layer noise, the ACK packet is not generated, which is interpreted by the transmitter as failure in delivery. The node retransmits the same packet for a maximum of \emph{aMaxFrameRetries} times before discarding it at the MAC layer. After transmitting a packet, the node turns to the Rx-mode and waits for the ACK.  The \emph{macAckWaitDuration} determines the maximum amount of time a node must wait for in order to receive the ACK before concluding that the packet (or the ACK) has collided. The default values of \emph{macminBE}, \emph{macmaxBE}, \emph{macMaxCSMABackoffs}, and \emph{aMaxFrameRetries} are 3, 5, 4, and 3 respectively.

%-----------------Analytical model-----------------------------------%

\section{Description of the Modeling Approach} \label{sec:analytical-model}

As shown in Figure~\ref{fig:network}, the network consists of sensors, relays and a base-station (BS). We consider time invariant links and static (immobile) nodes. The links are characterized by a predefined target packet error rate (PER) (e.g., a PER of 1\% on each link). We assume that the sensors generate traffic according to independent point processes. These constitute the aggregate external arrival process for the sensor network. Each node transmits its data to the next-hop node according to the topology; throughout this paper, \emph{we shall work with a tree topology so that each transmitter has exactly one receiver node}. The intermediate nodes along a route may be relays in which case they simply forward the incoming traffic, or they may be sensors which transmit their own packets as well as the received packets. Based on the network congestion, the nodes may discard packets due to consecutive failed CCAs or frame retries. Figure~\ref{fig:evolution} shows a typical sample path of the process evolution at a node. 
\begin{figure}[ht]
	\begin{center}
		\includegraphics[height=4cm, width=9cm]{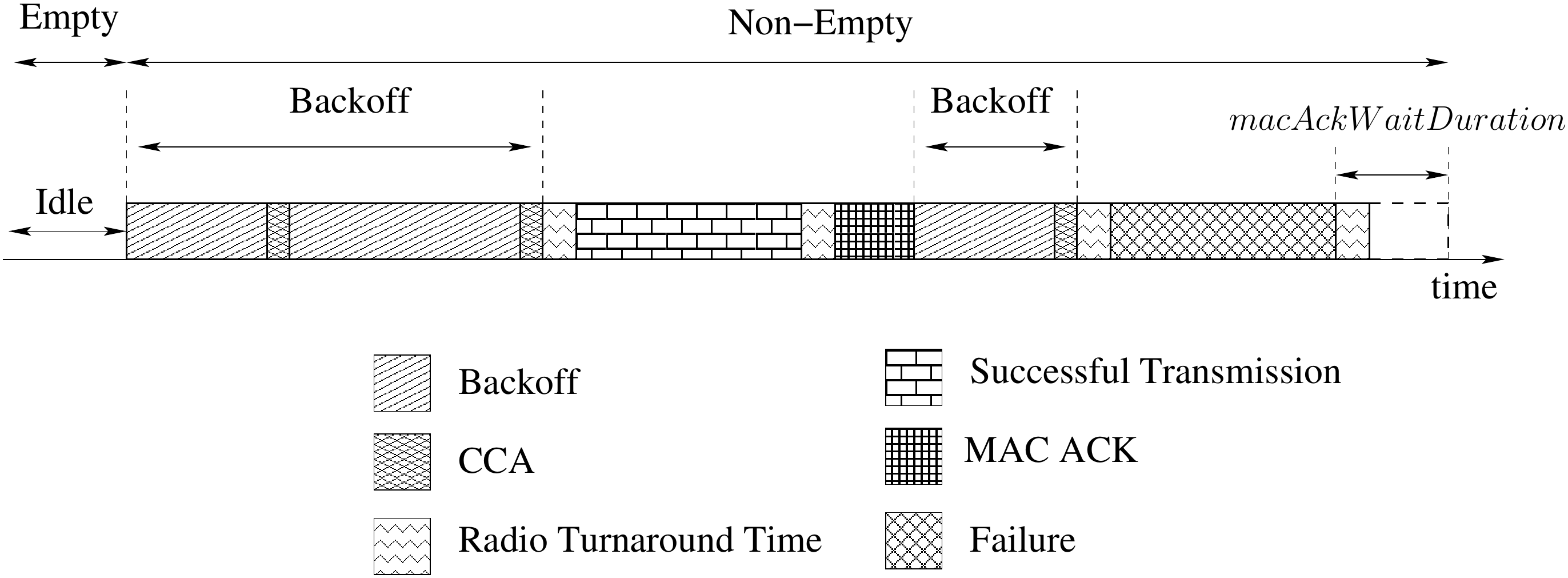}
		\caption{Process evolving at a tagged node as it goes through various states.}
		\label{fig:evolution}
		\vspace{-5mm}
	\end{center}
\end{figure}

\subsection{Modeling Simplifications}
\label{subsec:model-simplify}
The process evolutions at various nodes in the system are coupled, and each node (source or relay) can be present in any state; further, different nodes can interact with different subsets of nodes based on their Carrier Sense (CS) range and positions.\footnote{A node $j$ is said to be within the Carrier Sense (CS) range of a node $i$ if any channel activity (transmission) due to node $j$ can be detected (but not necessarily decoded) by node $i$.}  Hence, the random process representing the system is multi-dimensional and non-homogeneous. The exact analysis of such a process is intractable. We, therefore, make several simplifications:
\begin{description}[leftmargin=0cm]
\item [(S1)]All packets are of the same fixed length. \emph{Hence, the DATA transmission duration is fixed.}
\item [(S2)]The internode propagation delays are of the order of nanoseconds, and hence can be taken to be zero.
\item [(S3)]Collision model (zero packet capture): If a receiver is in the communication range of two or more nodes that are transmitting simultaneously, it does not receive any of those transmitted packets, i.e, all the packets involved in a collision get corrupted unlike an SINR model, where some of the interfering packets may still be received successfully, depending on the SINR threshold. 
\item [(S4)]The Inter Frame Spacing (IFS) is neglected. \emph{It means that the receiver does not discard the packet received during the IFS.}
\item [(S5)] By (S3), a packet is necessarily bad if any other packet is being heard by the same receiver. Even if this is not the case, a packet can be in error due to noise. The packet error probability on each link is fixed and known. \emph{As already mentioned, the links are not time varying and we can empirically obtain the error characteristics of a link.}
\item [(S6)]ACK packets are short and, therefore, not corrupted by PHY layer noise. 
\item[(S7)]A node's CCA succeeds when there is no transmission by any node in its Carrier Sense (CS) range at the time of initiation of its CCA. \emph{Recall that in the standard, the channel state is averaged over the $8$~symbol duration.}
\item [(S8)]The time taken by a transmitting node for the activities of successful transmission and collision are the same. We denote this time by $T_{tx}$. \emph{If the transmitted data collides at the receiver, the macAckWaitDuration for the transmitting node is equal to the sum of the turnaround time of $12~T_s$ and the ACK duration of $22~T_s$, a total of $34~T_s$ (see Figure~\ref{fig:transmission}).}
\item [(S9)]We assume symmetry in carrier sensing, and signal reception, i.e., \emph{if a node $i$ can detect (respectively decode) the packet transmissions of another node $j$, then node $j$ can also detect (resp. decode) the packet transmissions of node $i$.}
\end{description}

\begin{figure}[htbp]
	\begin{center}
		\includegraphics[scale=0.45]{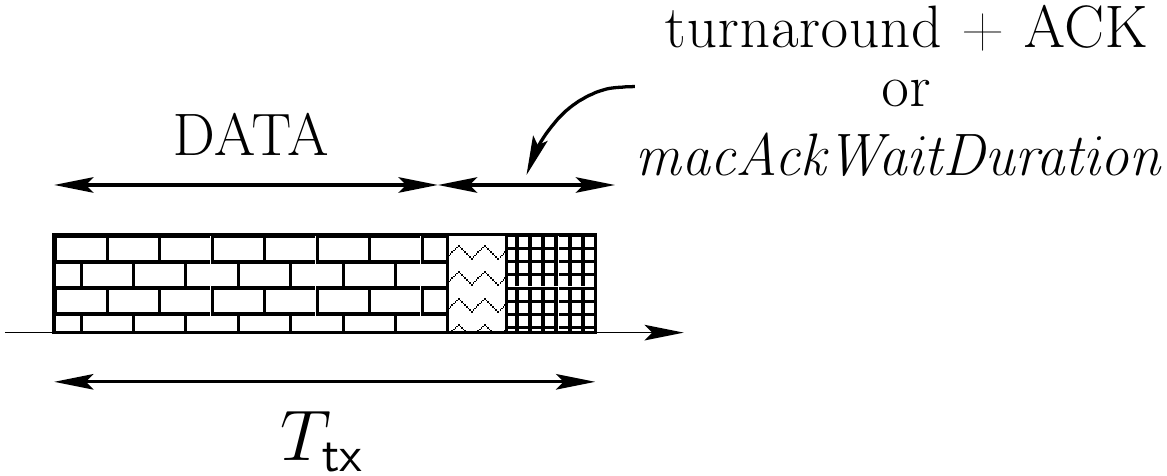}
		\caption{The transmission period $T_{\mathsf{tx}}$ which includes the DATA transmission time $T_x$, the turnaround time of $12~T_s$, and the MAC ACK duration of $22~T_s$ (or the \emph{macAckWaitDuration} of $34~T_s$, by \textbf{(S8)}).}	
		\label{fig:transmission}
		\vspace{-5mm}
	\end{center}
\end{figure}

\gap
\noindent
\emph{Decoupling approximation:} We employ an approximation whereby we model each node separately, incorporating the influence of the other nodes in the network by their average statistics, and as if these nodes were independent of the tagged node. This commonly used approximation is also called a ``mean field approximation'', and has been widely used in the wireless networking literature \cite{bianchi00performance,kumar-etal04new-insights,jindal09}. 

\subsection{Modeling Node Activity}
\label{subsec:node-activity-modeling}
The channel activity perceived by any node $i$ is only due to the nodes in its CS set (i.e., the set of nodes within its CS range) denoted by $\Omega_i$. Note that $\Omega_i$ does not include node $i$. Let us consider the durations during which the queue at node $i$ is non-empty. During these times, node $i$ alternates between performing CCAs and transmitting the packets from its queue. We model the CCA attempt process at node $i$ as a Poisson process of rate $\beta_i$ \emph{conditioned on being in backoff periods}\cite{goyal-etal09beaconless-zigbee}. For each node $j\in \Omega_i$, consider only those times at which the node is not transmitting (i.e., its queue is empty, or the node is in backoff). We model the CCA attempt process of each node $j\in \Omega_i$ conditioned on these times by an independent Poisson process with rate $\overline{\tau}_j^{(i)}$, $j\in\Omega_i$. By modeling simplification \textbf{(S1)}, we assume that all packets entering node $i$ have the same fixed length, and hence take the same amount of time when being transmitted over the medium (denoted by $T_{\mathsf{tx}}$). 

Let us now remove the time intervals at node $i$ during which the queue at this node is empty, thereby concatenating all the busy periods at the node. In this conditional time, as a result of the Poisson point process assumption for the attempt process at node $i$, and also for the attempt processes of the nodes in $\Omega_i$, we observe that instants at which packet attempts complete (equivalently, new backoff intervals for the attempts start) are renewal instants. These are denoted by $X^{(k)}_i$ in Figure~\ref{fig:ARP_calc_of_rates}. The corresponding renewal lifetimes are denoted by $W_i^{(k)}$. 

Figure~\ref{fig:system} shows details of a renewal cycle of node $i$. The dashed transmission durations belong to nodes other than $i$. The cycle always ends with a transmission from node $i$. This last transmission in a cycle always corresponds to a non-collided packet, which, however, could be received in error. 
\begin{figure}[h]
\begin{center}
\includegraphics[width=9cm]{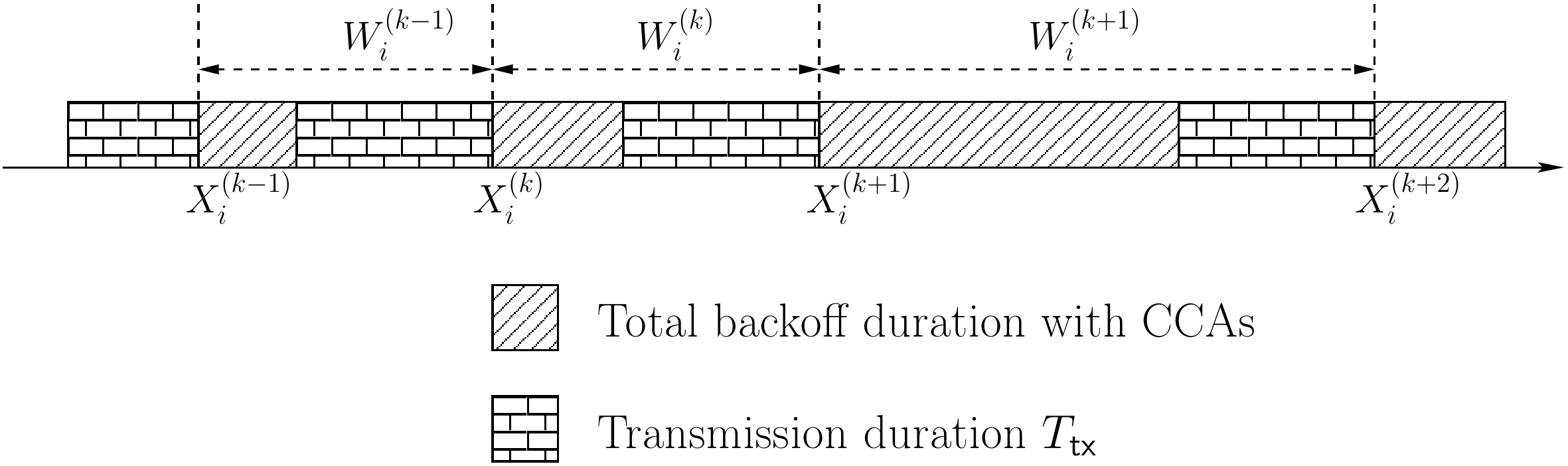}
\caption{The renewal process obtained by observing the process at a node $i$ after removing all the idle time periods from the original process shown in Figure~\ref{fig:evolution}. The renewal epochs are denoted by $\{X_i^{(k)}\}$ and the cycle lengths by $\{W_i^{(k)}\}$.}
\label{fig:ARP_calc_of_rates}
\vspace{-5mm}
\end{center}
\end{figure}
\begin{figure}[h]
	\begin{center}
		\includegraphics[scale=0.4]{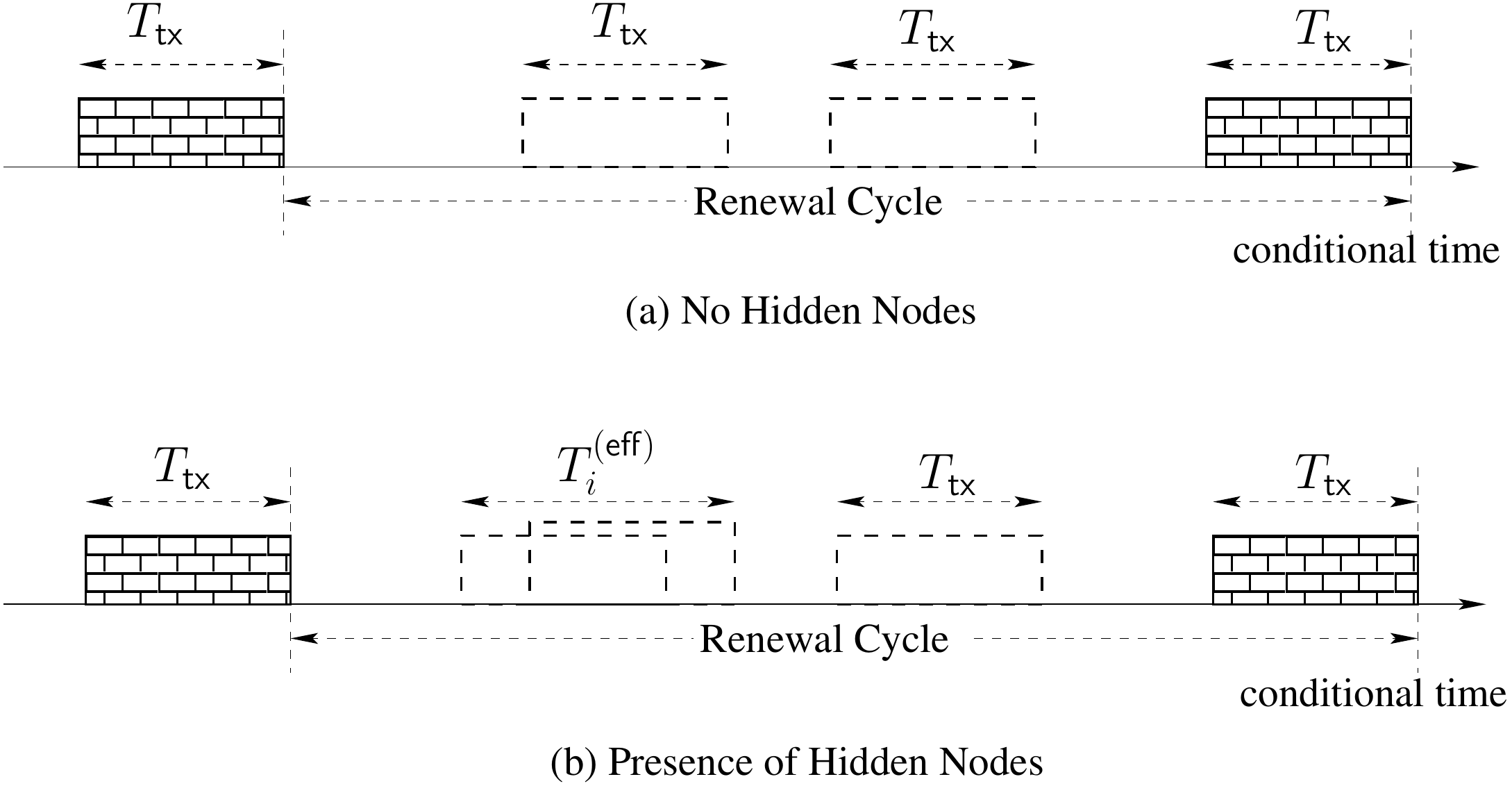}
		\caption{Renewal cycles at a tagged node, say $i$: (a) a renewal cycle when there are no hidden nodes in the system (all activity periods on the channel are of length $T_{\mathsf{tx}}$); (b) a renewal cycle when there are hidden nodes (due to hidden node collisions, the activity periods on the channel can be longer than $T_{\mathsf{tx}}$.)}
		\label{fig:system}
		\vspace{-5mm}
	\end{center}
\end{figure}
Before discussing the difference in activity lengths in the two network types (as shown in Figure~\ref{fig:system}), we list the various packet collision scenarios. Packet collisions in a wireless network with a CSMA MAC can occur due to the presence of hidden nodes. In addition, \emph{there are scenarios where transmissions from two transmitters placed within the CS range of each other can overlap, known as \textbf{Simultaneous Channel Sensing} and \textbf{Vulnerable Window}} (also known as First and Second Collision window respectively, in \cite{goyal-etal09beaconless-zigbee}). These two collision scenarios are explained in Figures~\ref{fig:simul_channel} and \ref{fig:CCAfail}, respectively. Note that we are making use of \textbf{(S2)} to avoid including insignificant timing details here.
\begin{figure}[h]
\begin{center}
	\includegraphics[scale=0.45]{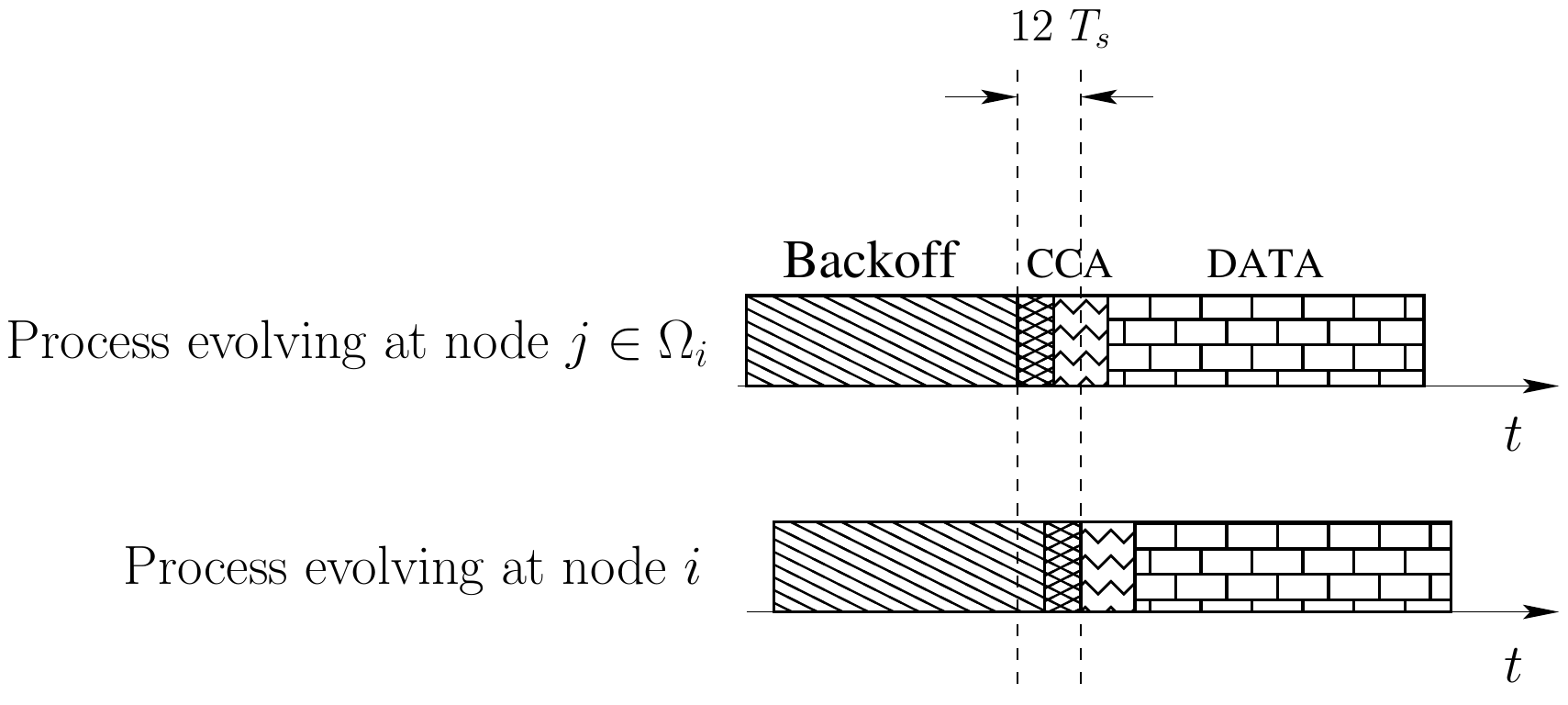}
	\caption{Node $j\in \Omega_i$ finishes its backoff, performs a CCA, finds the channel idle and starts transmitting the DATA packet. Node $i$ finishes its backoff anywhere in the shown $12~T_s$ duration and there is no other ongoing transmission in $\Omega_i$, its CCA succeeds and it enters the transmission duration. As a result, the DATA packets may collide at $r(i)\in \Omega_j$ and/or $r(j) \in \Omega_i$.}
	\label{fig:simul_channel}
	\vspace{-5mm}
\end{center}
\end{figure}
\begin{figure}[h]
\begin{center}
\includegraphics[scale=0.45]{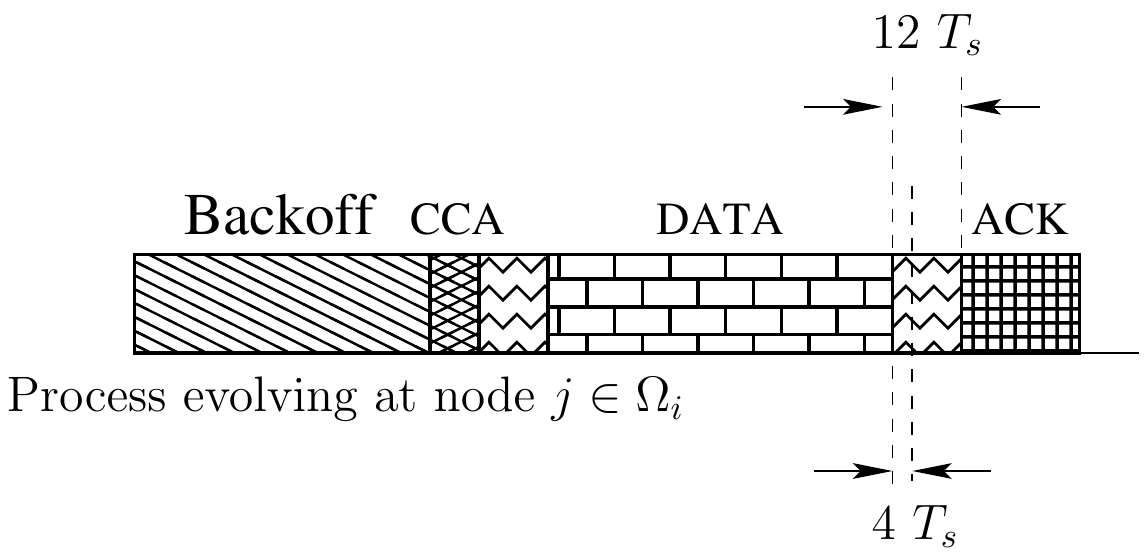}
	\caption{Vulnerable window of $4~T_s$ in the transmission period of node $j\in \Omega_i$ during which node $i$'s CCA attempt would be successful.}
	\label{fig:CCAfail}
	\vspace{-5mm}
\end{center}
\end{figure}

Since the vulnerable window size is small compared to the data transmission duration of a node (see Figure~\ref{fig:CCAfail}), \emph{we shall assume that the probability of a node's CCA initiating in the vulnerable window of another node is negligible}, and therefore, neglect the resulting packet collisions. Further, since the ACK packet size is just a small fraction of DATA packet size (e.g. compare an ACK packet ($22$~symbols) with a DATA packet of length $260$~symbols at PHY layer), \emph{we shall assume that the probability of packet collision involving ACK packets is negligible}.  

We now return to elucidate the difference in length of activity periods in a renewal cycle.
\begin{description}[leftmargin=0cm]
\item[(a) Absence of Hidden Nodes:] When a node transmits, all the other nodes in the network can hear it, resulting in CCA failures for other nodes that try to assess the channel in the transmission period. Also if two nodes are involved in simultaneous channel sensing, the activity period may extend from $T_{\mathsf{tx}}$ to a maximum of $(T_{\mathsf{tx}}+12~T_s)$. Since $12~T_s<<T_{\mathsf{tx}}$, we can assume that the activity period is only a single transmission period of duration $T_{\mathsf{tx}}$. 
\item[(b) Presence of Hidden Nodes:] Since a node's hearing capacity is limited, it may not perceive the activities of all the nodes in the network which may cause dilation of activity period.
\begin{figure}[ht]
\begin{center}
		\includegraphics[scale=0.3]{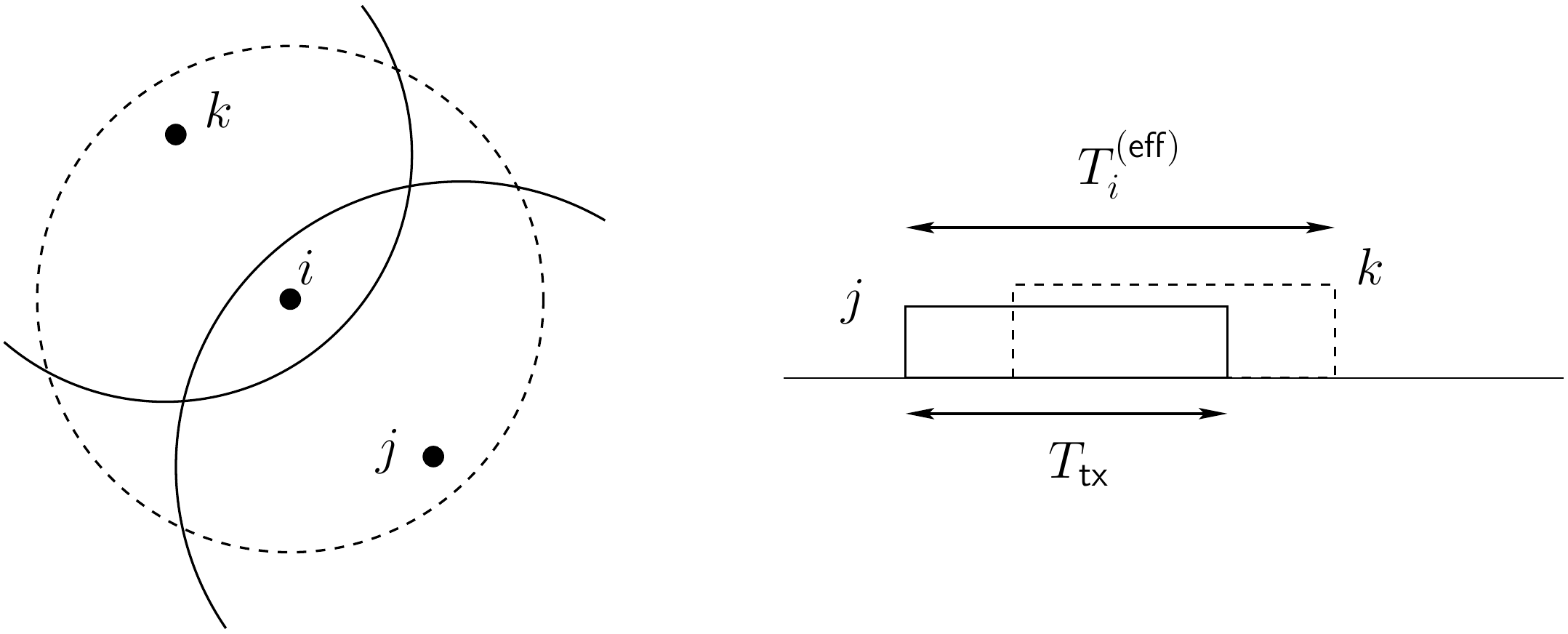}
		\caption{\emph{Dilation of transmission period as perceived by node $i$}.}
		\label{fig:t_eff}
		\vspace{-5mm}
\end{center}
\end{figure}
\if 0
\begin{figure}[ht]
\begin{center}
		\includegraphics[scale=0.35]{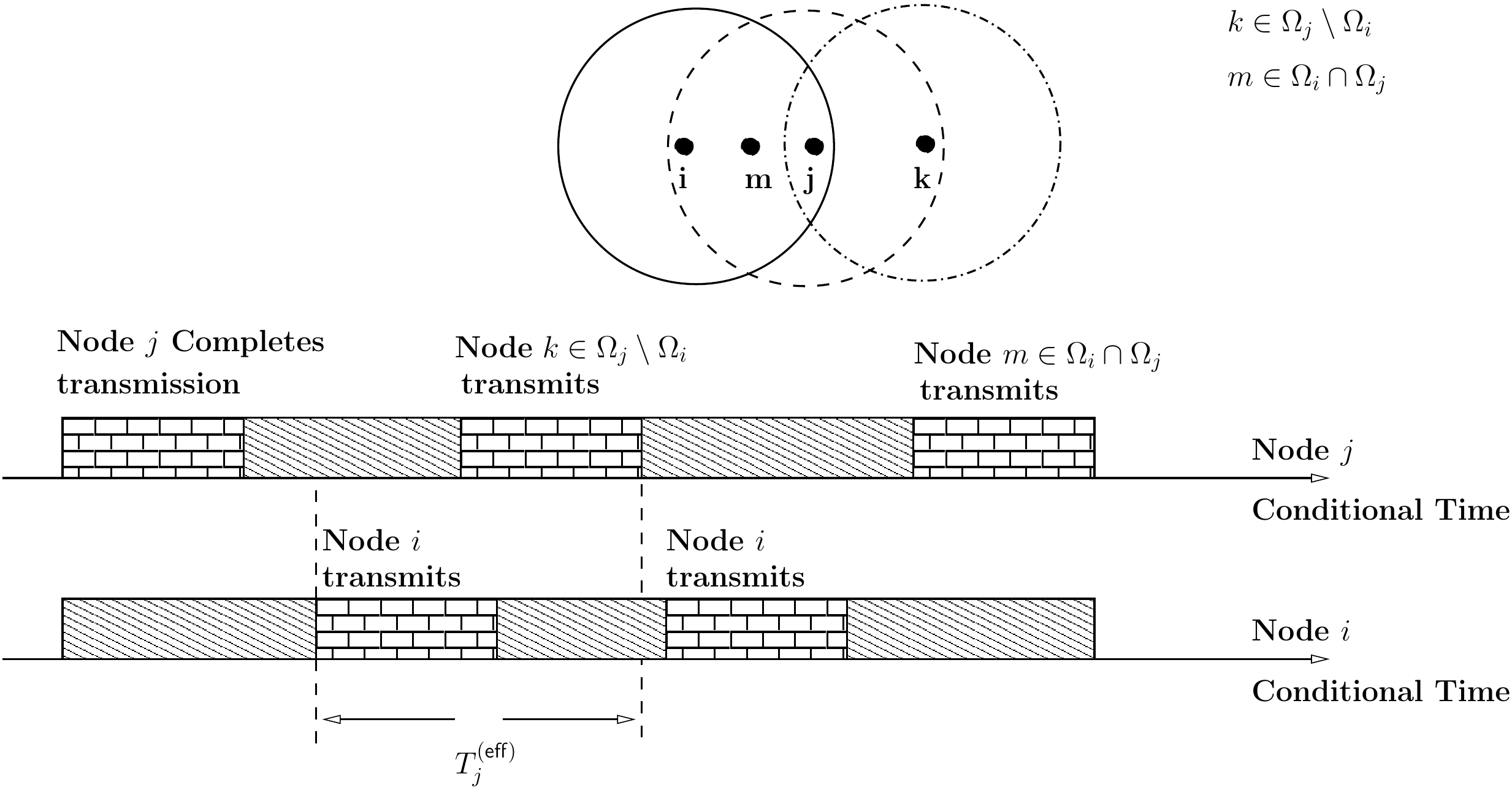}
		\caption{\emph{Typical hidden node case and Evolution processes of Node $i$ and Node $j$.}}
		\label{fig:hidden_node_explained}
		\vspace{-3mm}
\end{center}
\end{figure}
\fi
As shown in the Figure \ref{fig:t_eff}, node $j$ and $k$ are hidden from each other with respect to the receiver node $i$. Suppose node $j$ starts transmitting while nodes $i$ and $k$ are silent. Since node $k$ cannot hear node $j$, it can start its own transmission while the transmission from node $j$ is still going on. For node $i$, this looks like a dilated transmission period, whose length is denoted by $T_{i}^{(\mathsf{eff})}$. 
\end{description}

\section{Derivation of Node Interaction Equations}
\label{sec:fixed-pt-eqns-derivation}
In the following subsections, we consider the node processes described in Section~\ref{sec:analytical-model}, and identify certain useful steady state quantities at each node, e.g., CCA attempt rate. Then we employ the decoupling approximation (described in Section~\ref{sec:analytical-model}), and detailed stochastic analysis, to write down certain fixed point equations relating these quantities. In doing this, \emph{we assume the network to be stable so that these steady state quantities exist}. Then in Section~\ref{sec:iterative-scheme}, we use an iterative scheme to solve for the unknown variables, and finally use the quantities thus obtained to evaluate the network performance, namely, end-to-end delay and packet delivery probability from each source node. The analysis will be presented for the more general case where hidden nodes may be present in the system; wherever necessary, we shall mention the changes required in the expressions for the special case where no hidden nodes are present in the system.

%------------------------------------------------------------------------------------------------------------------------------------------------
\subsection{Derivation of Fixed Point Equations}
\label{subsec:fixed-pt-derivation}
We begin by analyzing the process at node $i$ shown in Figures~\ref{fig:ARP_calc_of_rates} and \ref{fig:system}. Consider an epoch at which node $i$ starts its backoffs in order to send the first packet in its queue. At this instant, each node $j \in \Omega_i$ is either empty or performing backoffs for its HOL packet or transmitting its HOL packet. Recall that the rate at which node $i$ completes backoffs during its backoff periods is denoted by $\beta_i$. The activity of node $i$ is affected by the residual rate of CCA attempts by $j \in \Omega_i$ after removing those CCA attempts that fail due to nodes that are hidden from node $i$; this rate is denoted by $\overline{\tau}^{(i)}_j$. Modeling these conditional attempt processes as independent Poisson processes, we  define:
\begin{eqnarray*}
\eta_i = \frac{\beta_i}{\beta_i + \displaystyle{\sum_{j\in\Omega_i}\overline{\tau}_j^{(i)}}} \quad &;& \quad  g_i = \frac{1}{\beta_i + \displaystyle{\sum_{j\in\Omega_i}\overline{\tau}_j^{(i)}}} \\
\text{and} \qquad c_i &=& \Bigg(1-e^{-12~T_\mathsf{s}\beta_i}\Bigg) 
\end{eqnarray*}
Thus, $\eta_i$ is the probability that node $i$ makes a CCA attempt before any node in $\Omega_i$, $g_i$ is the mean time until the first CCA attempt, and $c_i$ is the probability that after an attempt by a node in $\Omega_i$, node $i$ attempts within the 12 symbol vulnerable period, thus causing a ``simultaneous'' channel sensing collision (see Figure~\ref{fig:simul_channel}).

\if 0
Since ZigBee is designed to operate at fairly low packet generation rates at which packet discard probability (due to successive CCA failures or frame retries) is very low, all the recursive equations written below upto \ref{subsubsec:packet-failure} do not take packet discards into account.
\fi

\subsubsection{Perceived CCA attempt rate, $\overline{\tau}_j^{(i)}$}
\noindent
In a network with hidden nodes, in general, $\Omega_j \setminus \Omega_i$ is nonempty. Hence, CCA attempts of  node $j$ could be blocked by the activities of nodes in $\Omega_j \setminus \Omega_i$, thus leading to the node~$i$ perceiving a CCA attempt rate from node~$j$ that is less than the CCA attempt rate when $\Omega_j \setminus \Omega_i = \phi$ (absence of hidden nodes). We have denoted this effective CCA attempt rate of node $j$ as perceived by node~$i$ by $\overline{\tau}_j^{(i)}$ for all $j \in  \Omega_i$. Define
\begin{description}
\item [$N_j^{(\mathsf{cca})}(t)$:] Total number of CCAs attempts of Node $j$ in time $(0, t]$ 
\item [$N_j^{(\mathsf{ccaf_{-i}})}(t)$:] Number of failed CCA attempts of Node $j$ due to the nodes in $\Omega_j \setminus \Omega_i$ in time $(0, t]$
\item [$I_j(t)$:] The time during which node $j$ is not transmitting in time $(0, t]$
\item [$T_{j}^{\mathsf{(bo)}}(t)$:] Time for which node $j$ is in backoff in the interval $(0, t]$
\item [$T_{j}^{\mathsf{(ne)}}(t)$:] Queue non-empty period of node $j$ in the interval $(0, t]$
\end{description}

Note that with these definitions, we have 
\begin{align*}
\beta_j &= \lim_{t\rightarrow \infty}\frac{N_{j}^{(\mathsf{cca})}(t)}{T_{j}^{\mathsf{(bo)}}(t)}\text{ a.s}
\end{align*}

We can further define,

\begin{align*}
b_j &= \lim_{t\rightarrow \infty}\frac{T_{j}^{\mathsf{(bo)}}(t)}{T_{j}^{\mathsf{(ne)}}(t)}\\
q_j &= \lim_{t\rightarrow \infty}\frac{T_{j}^{\mathsf{(ne)}}(t)}{t}
\end{align*}
where $b_j$ can be interpreted as the long-term fraction of time node $j$ is in backoff provided it is non-empty, and $q_j$ is the long-term fraction of time that node $j$ is non-empty, which is also the queue non-empty probability of node $j$, assuming the system is ergodic.

Then, we can write
\begin{align*}
  \overline{\tau}_j^{(i)} &= \lim_{t\rightarrow \infty}\frac{N_j^{(\mathsf{cca})}(t) - N_j^{(\mathsf{ccaf_{-i}})}(t)}{I_j(t)} \text{ a.s}\\ 
                          &= \lim_{t\rightarrow \infty} \frac{\frac{N_{j}^{(\mathsf{cca})}(t)}{T_{j}^{\mathsf{(bo)}}(t)}\times \frac{T_{j}^{\mathsf{(bo)}}(t)}{T_{j}^{\mathsf{(ne)}}(t)}\times \frac{T_{j}^{\mathsf{(ne)}}(t)}{t}\times \left(1 - \frac{N_j^{(\mathsf{ccaf_{-i}})}(t)}{N_j^{(\mathsf{cca})}(t)}\right) }{\frac{I_{j}(t)}{t}}                           
\end{align*}
Using the earlier definitions, we have 
\begin{align}
\overline{\tau}_j^{(i)} &= \frac{\beta_j \times b_j \times q_j \times (1 - \alpha_j^{(-i)})}{1 - q_j + q_j \times b_j}\label{eqn:1}
\end{align}
where $\alpha_j^{(-i)}$ is the probability of CCA failure of node $j$ only due to the nodes in $\Omega_j \setminus \Omega_i$, and is derived below.

\subsubsection{Computation of $\alpha_j^{(-i)}$}

Using the notation introduced earlier, we can write 
\begin{align*}
   \alpha_j^{(-i)} &= \lim_{t\rightarrow \infty} \frac{N_j^{(\mathsf{ccaf_{-i}})}(t)}{N_j^{(\mathsf{cca})}(t)} \text{ a.s}\\
                   &= \lim_{t\rightarrow \infty} \frac{\frac{N_j^{(\mathsf{ccaf_{-i}})}(t)}{T_j^{(\mathsf{ne})}(t)}}{\frac{N_j^{(\mathsf{cca})}(t)}{T_j^{(\mathsf{ne})}(t)}} \\
\end{align*}
Applying the Renewal-Reward Theorem (RRT) \cite{kulkarni95modeling-stochastic-systems}, we have  
 \begin{align}                  
    \alpha_j^{(-i)} &= \frac{\frac{\mathbf{N}_j^{(\mathsf{ccaf_{-i}})}}{W_{j}}}{\frac{\mathbf{N}_j^{(\mathsf{cca})}}{W_{j}}} \\
    &= \frac{\mathbf{N}_j^{(\mathsf{ccaf_{-i}})}}{\mathbf{N}_j^{(\mathsf{cca})}}\label{eqn:alpha_j_i_exp}
\end{align} 
where $W_j$ is the mean time between the ends of transmission during a busy period of node $j$ (i.e., the mean renewal cycle length).  Now, by taking the reward to be the mean number of total CCAs by node $j$ in the renewal cycle, we can write 
\begin{align*}
   \mathbf{N}_j^{(\mathsf{cca})} &= \eta_j + (1 - \eta_j)c_j + (1 - \eta_j)(1 - c_j)(\beta_j T_j^{(\mathsf{eff})} + \mathbf{N}_j^{(\mathsf{cca})})   
\end{align*}

%------------------Omitted for now--------------------------------------%
\if 0
Here, the first term says that with probability $\eta_j$ , only one CCA is performed by node $j$ after which it starts transmitting, and the cycle ends. The second term corresponds to an event of simultaneous channel sensing in which node $j$ is involved with some other node that had already started its transmission, as a result of which there is only one CCA performed by node $j$ followed by its data transmission and the cycle ends. In the final term, node $j$ neither starts transmitting first nor it is involved in a simultaneous channel sensing event.
\fi
%----------------------------------------------------------------------%
Note that node $j$ performs CCAs at the rate of $\beta_j$ for the entire dilated transmission period $T_j^{(\mathsf{eff})}$ i.e., a total of $\beta_j T_j^{(\mathsf{eff})}$ CCAs all of which fail, and the cycle continues after this period. Upon rearranging terms, we have
\begin{align}\label{eqn:N_cca_eqn}
     \mathbf{N}_j^{(\mathsf{cca})} &= \frac{\eta_j + (1 - \eta_j)c_j + (1 - \eta_j)(1 - c_j)\beta_j T_j^{(\mathsf{eff})} }{1 - (1 - \eta_j)(1 - c_j)}
\end{align}
Again, let the reward be the mean number of failed CCA attempts by node $j$ in a renewal cycle due to transmissions by nodes in $\Omega_j \setminus \Omega_i$, say $\mathbf{N}_j^{(\mathsf{ccaf_{-i}})}$, which we can write, using \textbf{(S7)} as
\begin{equation}\label{eqn:ccafail_node_j}\begin{split}
\mathbf{N}_j^{(\mathsf{ccaf_{-i}})}&=\left(\frac{\displaystyle{\sum_{k\in \Omega_j \cap \Omega_i} \overline{\tau}_k^{(j)}}}{\beta_j + \displaystyle{\sum_{l\in \Omega_j} \overline{\tau}_l^{(j)}}} \right)(1 - c_j)\left(\mathbf{N}_j^{(\mathsf{ccaf_{-i}})}\right) + \\ &\left(\frac{\displaystyle{\sum_{k\in \Omega_j \setminus \Omega_i} \overline{\tau}_k^{(j)}}}{\beta_j + \displaystyle{\sum_{l\in \Omega_j} \overline{\tau}_l^{(j)}}} \right)
(1 - c_j)\left(\beta_j T_{\mathsf{tx}} + \mathbf{N}_j^{(\mathsf{ccaf_{-i}})}\right)
\end{split}\end{equation}
Clearly, in case of a simultaneous channel sensing event involving node $j$, the cycle ends with zero reward. Therefore, the terms in Equation~\eqref{eqn:ccafail_node_j} only consider the case where node $j$ is not involved in simultaneous channel sensing. The first term in \eqref{eqn:ccafail_node_j} corresponds to transmission attempts by nodes in $\Omega_j \cap \Omega_i$; this results in zero reward, and the cycle continues thereafter. The second term accounts for transmission attempts by nodes in $\Omega_j \setminus \Omega_i$; this results in a total reward of $\beta_j T_{\mathsf{tx}}$ and the cycle continues thereafter. Upon rearranging terms, we have
\begin{align*}
 \mathbf{N}_j^{(\mathsf{ccaf_{-i}})}  &= \frac{\left(\frac{\displaystyle{\sum_{k\in \Omega_j \setminus \Omega_i} \overline{\tau}_k^{(j)}}}{\beta_j + \displaystyle{\sum_{l\in \Omega_j} \overline{\tau}_l^{(j)}}} \right)(1 - c_j)\left(\beta_j T_{\mathsf{tx}}\right)}{1 - (1 - \eta_j)(1 - c_j)}
\end{align*}
Hence from \ref{eqn:alpha_j_i_exp}, the CCA failure probability of node $j$ only due to the nodes in $\Omega_j \setminus \Omega_i$ is given by 
\begin{align}
  \alpha_j^{(-i)} &= \frac{\left(\frac{\displaystyle{\sum_{k\in \Omega_j \setminus \Omega_i} \overline{\tau}_k^{(j)}}}{\beta_j + \displaystyle{\sum_{l\in \Omega_j} \overline{\tau}_l^{(j)}}} \right)(1 - c_j)\left(\beta_j T_{\mathsf{tx}}\right)}{\eta_j + (1 - \eta_j)c_j + (1 - \eta_j)(1 - c_j)\beta_j T_j^{(\mathsf{eff})}}\label{eqn:6}
\end{align}

\remark For networks with no hidden nodes, $\alpha_j^{(-i)} = 0$.

\subsubsection{The dilated activity period, $T_i^{(\mathsf{eff})}$}
\label{subsubsec:Ti-eff}
As mentioned earlier, the length of the dilated activity period as perceived by a node depends upon the set of nodes in its CS range. We propose two different models for calculating $T_i^{(\mathsf{eff})}$.
\begin{itemize}
    \item[1. ] \textbf{$M/D/\infty$ Model :} Since each node has a different set of neighbours, we can make the following approximation to simplify our analysis:
\begin{description}
\item[(A1)] The mean length of Dilated Activity Period as perceived by node $i$ is equal to the length of mean busy period of an $M/D/\infty$ queue where the deterministic service time is equal to a single transmission period $T_{\mathsf{tx}}$, and the arrival process is approximated as a Poisson process having rate equal to the aggregate transmission initiation rate of \emph{all} the nodes in the CS range of node $i$. 
\end{description}
Figure~\ref{fig:MDinf_approx} represents the $M/D/\infty$ model pictorially. Note that the above assumption is equivalent to saying that all nodes in $\Omega_i$ are hidden from each other, and hence this approximation results in a larger mean for the dilated activity period. 

Note that the transmission initiation rate of any node $j\in \Omega_i$, as perceived by node $i$, is $\overline{\tau}_j^{(i)}$. For any node $i$, let the
aggregate transmission initiation rate for nodes in $\Omega_i$ be
\begin{figure}[t]
	\begin{center}
		\includegraphics[scale=0.5]{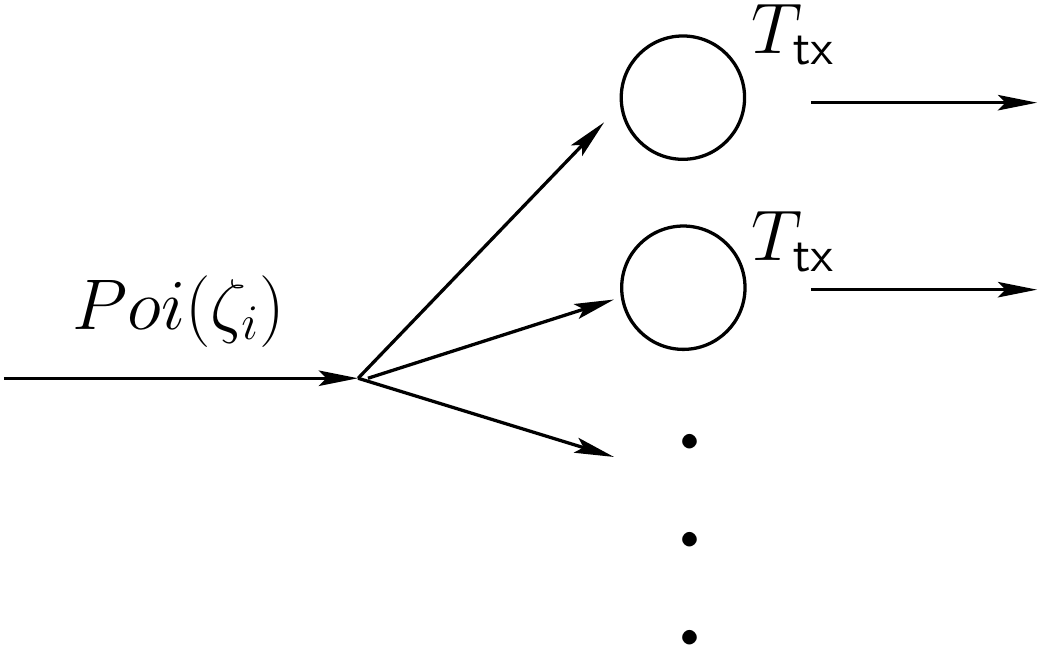}
		\caption{The $M/D/\infty$ approximation for the calculation of $T_i^{(\mathsf{eff})}$. The nodes in $\Omega_i$ are assumed to be hidden from each other and their transmission initiations are approximated by a Poisson process with a rate of $\zeta_i$ given by Equation~\eqref{eqn:zeta}. The service time for each of these transmissions is $T_{\mathsf{tx}}$.}
		\label{fig:MDinf_approx}
	\end{center}
\end{figure}
\begin{eqnarray}
\zeta_i = \sum_{j\in \Omega_i}\overline{\tau}_j^{(i)} 
\label{eqn:zeta}
\end{eqnarray}
Assuming node $j\in \Omega_i$ has started transmission and another node $k \in \Omega_i$ starts at time $u < T_{\mathsf{tx}}$, the expression for $T_i^{(\mathsf{eff})}$ can be written recursively as 
\begin{eqnarray}
\textstyle{T_i^{(\mathsf{eff})}} &\textstyle{=}& \textstyle{T_{\mathsf{tx}}\exp\{-\zeta_i T_{\mathsf{tx}}\} + \int_0^{T_{\mathsf{tx}}}(u+T_i^{(\mathsf{eff})})\zeta_i\exp\{-\zeta_i u \}du} \nonumber \\
&\textstyle{=}& \frac{1}{\zeta_i}(\exp\{\zeta_i T_{\mathsf{tx}}\} - 1)
\end{eqnarray}
 \item[2. ] \textbf{Boorstyn et al. \cite{boorstyn-etal87CSMA-throughput-analysis} Model:}
     For node $i$ in Figure~\ref{fig:t_eff}, Figure~\ref{fig:boorstyn_model} shows its evolution process in conditional time. We see that when nodes in $\Omega_i$ attempt packet transmission, then we have a dilated transmission period, $T_i^{(\mathsf{eff})}$ as perceived by Node $i$.
\begin{figure*}[t]
\begin{center}
	\includegraphics[scale=0.5]{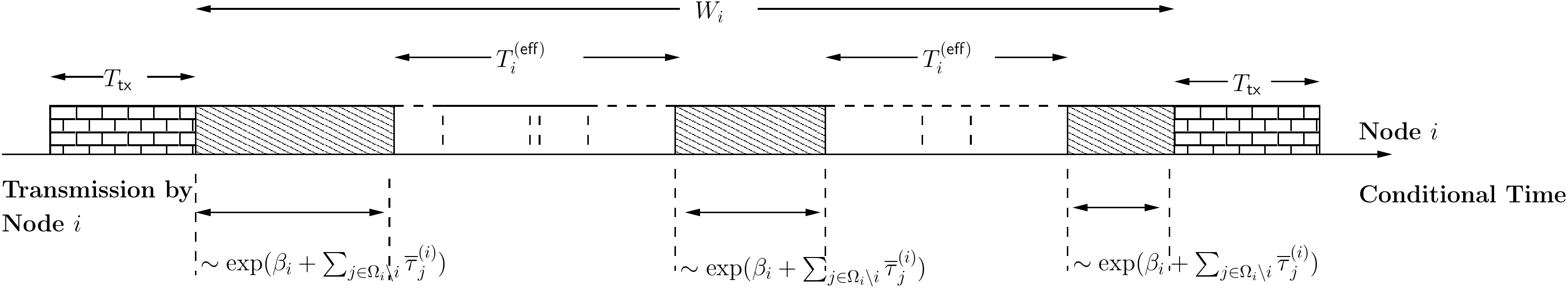}
	\caption{Evolution Process of Node $i$ in conditional time. We see that when nodes in $\Omega_i$ attempt packet transmission, we have a dilated transmission period, $T_i^{(\mathsf{eff})}$ as perceived by Node $i$.}
	\label{fig:boorstyn_model}
\end{center}
\end{figure*}

\begin{figure}[h]
\begin{center}
	\includegraphics[scale=0.3]{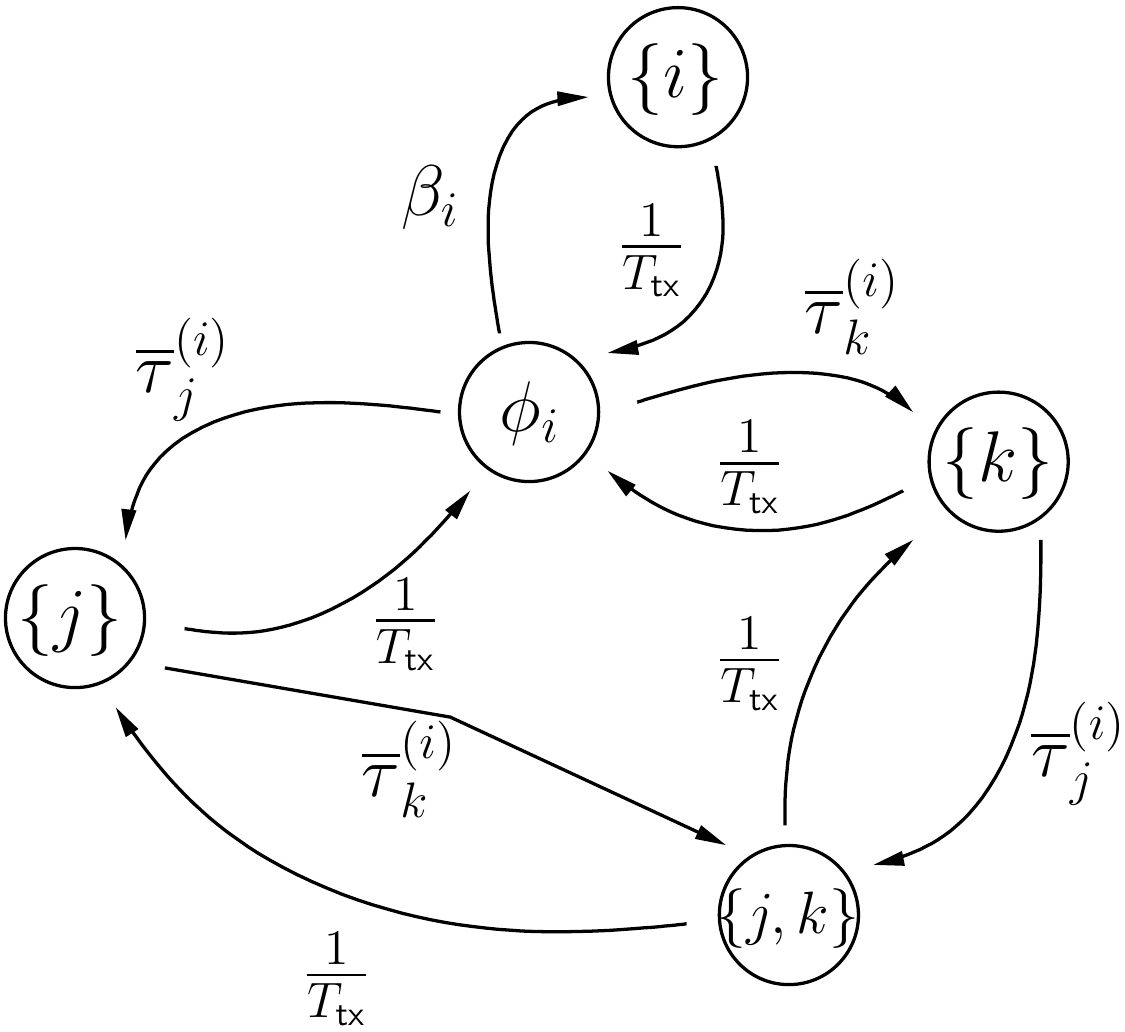}
	\caption{State transition diagram for computing $T_i^{(\mathsf{eff})}$ for the nodes in Figure~\ref{fig:t_eff}}
	\label{fig:state_diagram}
\end{center}
\end{figure}
Here we will use a CTMC model as suggested by Boorstyn et al. \cite{boorstyn-etal87CSMA-throughput-analysis} for determining $T_i^{(\mathsf{eff})}$ during times when node $i$ is non empty. Let $\phi_{i}$ be the state when all the nodes in $\Omega_i$ as well as node $i$ are in backoff. Then the state transition diagram for computing $T_i^{(\mathsf{eff})}$ for the nodes in Figure~\ref{fig:t_eff} is shown in Figure~\ref{fig:state_diagram}. Note that return to $\phi_{i}$ are renewal instants since the transition out of $\phi_{i}$ are all exponentially distributed. By the \emph{insensitivity property} explained in \cite{boorstyn-etal87CSMA-throughput-analysis}, the steady state probability of being in state $\phi_{i}$, say $\pi_{\phi_{i}}$, does not depend on the distribution of packet transmission time except by its mean. So we can take the packet service times to be exponentially distributed. Hence $\pi_{\phi_{i}}$ can be obtained as in \cite{boorstyn-etal87CSMA-throughput-analysis}, i.e.,

\begin{align}\label{eqn:phi_eqn_1}
  \pi_{\phi_{i}} &= \left[\sum_{D_{i}\in \mathscr{D}_i} \prod_{j \in D_{i}} \overline{\tau}_j^{(i)} T_{\mathsf{tx}}\right]^{-1}  
\end{align}
where $D_{i}$ is a set of nodes in $\Omega_i$ that are actively transmitting, and $\mathscr{D}_i$ is the collection of all such sets. Again, from the RRT we obtain $\pi_{\phi_i}$ as the ratio of the time the system is in state $\phi_{i}$ and the mean time between the ends of transmission during a busy period of Node~$i$, i.e., 
\begin{equation*}
    \resizebox{1.0\hsize}{!}{$\pi_{\phi_{i}} = \frac{\frac{1}{\beta_i + \displaystyle{\sum_{j \in \Omega_i} \overline{\tau}_j^{(i)}}}}{\frac{1}{\beta_i + \displaystyle{\sum_{j \in \Omega_i} \overline{\tau}_j^{(i)}}} + \frac{\beta_i}{\beta_i + \displaystyle{\sum_{j \in \Omega_i} \overline{\tau}_j^{(i)}}} \times T_{\mathsf{tx}} + \frac{\displaystyle{\sum_{j \in \Omega_i} \overline{\tau}_j^{(i)}}}{\beta_i + \displaystyle{\sum_{j \in \Omega_i} \overline{\tau}_j^{(i)}}} \times T_{i}^{(\mathsf{eff})}}$}
\end{equation*}
which yields
\begin{equation}\label{eqn:phi_eqn_2}   
    \pi_{\phi_{i}} = \frac{1}{1 + \beta_i \times T_{\mathsf{tx}} + \left(\displaystyle{\sum_{j \in \Omega_i} \overline{\tau}_j^{(i)}}\right)\times T_{i}^{(\mathsf{eff})}}
\end{equation}
Hence, from Equation~\eqref{eqn:phi_eqn_1} and \eqref{eqn:phi_eqn_2}, we can obtain $T_{i}^{(\mathsf{eff})}$ for each node $i$. Observe that the dilated activity period computed using the Boorstyn et al. model would be less conservative than the $M/D/\infty$ model. Also note that identification of $\mathscr{D}_{i}$ is an NP-complete combinatorial problem \cite{boorstyn-etal87CSMA-throughput-analysis}; however, efficient algorithms exist that handle networks of arbitrary topology and moderate size (50-100 nodes).  
\end{itemize}

\subsubsection{Probability of CCA failure, $\alpha_i$}
The probability of CCA failure is the probability of the occurrence of at least one transmitting node in the CS range of node $i$, given that node $i$ performs a CCA. Defining $N_i^{(\mathsf{cca})}(t)$ and $N_i^{(\mathsf{f})}(t)$ as the total number of CCAs and number of failed CCAs in the interval $(0,t]$, respectively, we have:
\begin{eqnarray*}
\alpha_i = \lim_{t \to \infty}\frac{N_i^{(\mathsf{f})}(t)}{N_i^{(\mathsf{cca})}(t)} \text{ a.s }= \lim_{t \to \infty}\frac{\frac{N_i^{(\mathsf{f})}(t)}{T_{i}^{\mathsf{(ne)}}(t)}}{\frac{N_i^{(\mathsf{cca})}(t)}{T_{i}^{\mathsf{(ne)}}(t)}}
\end{eqnarray*}
Applying Renewal-Reward Theorem (RRT) \cite{kulkarni95modeling-stochastic-systems}, we get
\begin{eqnarray}
\alpha_i &=&\frac{\frac{\mathbf{N}_i^{(\mathsf{f})}}{W_i}}{\frac{\mathbf{N}_i^{(\mathsf{cca})}}{W_i}}\nonumber\\
&=&\frac{\mathbf{N}_i^{(\mathsf{f})}}{\mathbf{N}_i^{(\mathsf{cca})}}
\label{eqn:alpha_rates}
\end{eqnarray}
where $\mathbf{N}_i^{(\mathsf{f})}$ is the mean number of failed CCAs and $\mathbf{N}_i^{(\mathsf{cca})}$ is the mean number of total CCAs in a cycle. $W_i$ is the mean renewal cycle length. 

%---------Remove derivation of W_i for now--------------%
\if 0
It is given by
\begin{eqnarray*}
W_i &=& g_i+\eta_i T_{\mathsf{tx}}+(1-\eta_i)c_i T_{\mathsf{tx}}+(1-\eta_i)(1-c_i)(T^{(\mathsf{eff})}_i+W_i)
\end{eqnarray*}
The first term is the mean time until any node in $\Omega_i \cup \{i\}$ starts transmission as seen by node $i$. In the second term, with probability $\eta_i$, node $i$ finishes its backoff first and attempts a CCA. Since no other node is transmitting, this CCA succeeds and node $i$ transmits a packet and the cycle ends after the transmission period $T_{\mathsf{tx}}$. Note that this transmission period cannot be dilated because it blocks all the nodes in $\Omega_i$ (ignoring the maximum jitter of $12~T_\mathsf{s}$ due to simultaneous channel sensing). The third terms says that some other node finishes its backoff first \emph{and} starts transmitting further getting involved in a simultaneous channel sensing event with node $i$ (which happen with a probability of $c_i$) due to which node $i$ starts transmitting and the cycle ends after the transmission period $T_{\mathsf{tx}}$ (again note that this transmission period cannot be dilated either). The final term depicts the event in which another node starts transmitting first and node $i$ does not perform a CCA in the vulnerable period. This transmission period is susceptible to dilation and hence taken to be $T_i^{(\mathsf{eff})}$. The final expression for $W_i$ turns out to be 
\begin{eqnarray*}
W_i &=& \frac{g_i+\eta_i T_{\mathsf{tx}}+(1-\eta_i)c_i T_{\mathsf{tx}}+(1-\eta_i)(1-c_i)T_i^{(\mathsf{eff})}}{1-(1-\eta_i)(1-c_i)}
\end{eqnarray*}
\fi
%------------------Derivation of W_i upto this point----------------%

Computation of $\mathbf{N}_i^{(\mathsf{cca})}$ is as in Equation~\eqref{eqn:N_cca_eqn}. Now the mean number of failed CCA attempts by node $i$ in a renewal cycle, i.e., $\mathbf{N}_i^{(\mathsf{f})}$, can be written using \textbf{(S7)} as
%Again, let the reward be the total number of failed CCA attempts by node $i$ in a renewal cycle, say, $\mathbf{N}_i^{(\mathsf{f})}$, which can be written using \textbf{(S7)} as
\begin{eqnarray*}
\mathbf{N}_i^{(\mathsf{f})} &=& (1-\eta_i)(1-c_i)\Big(\beta_i T^{(\mathsf{eff})}_i+\mathbf{N}_i^{(\mathsf{f})}\Big)
\end{eqnarray*}
Rearranging and using Equation~\eqref{eqn:alpha_rates}, the CCA failure probability is given by
\begin{eqnarray}
\alpha_i &=& \frac{(1-\eta_i)(1-c_i)\beta_i T^{(\mathsf{eff})}_i}{\eta_i + (1-\eta_i)c_i + (1-\eta_i)(1-c_i)\beta_i T^{(\mathsf{eff})}_i}
\label{eqn:alpha}
\end{eqnarray}

\remark Note that for no hidden node network, $T^{(\mathsf{eff})}_i = T_{\mathsf{tx}}$.
%%%%%%%%%%%%%%%%%%%%%%%%%%%%%%

\subsubsection{Packet failure probability, $\gamma_i$}
\label{subsubsec:packet-failure}
A transmitted packet can fail to be decoded by its intended receiver due to a collision, or due to noise. We do not take packet capture into account here (see \textbf{(S3)}). Define $M_i(t)$ as the total number of transmissions in interval $(0,t]$, $M_i^{(\mathsf{c})}(t)$  as total number of collisions in interval $(0,t]$ and $l_i$ as the probability of \emph{data} packet error (known by \textbf{(S5)}) on the link between node $i$ and its receiver $r(i)$ due to noise. Recall that ACKs are not corrupted by PHY noise (see \textbf{(S6)}). The probability of packet failure, $\gamma_i$, is defined as:
\begin{eqnarray*}
\gamma_i&=&\Bigg(\lim_{t \to \infty}\frac{M_i^{(\mathsf{c})}(t)}{M_i(t)}\Bigg)+\Bigg(1-\lim_{t \to \infty}\frac{M_i^{(\mathsf{c})}(t)}{M_i(t)}\Bigg)l_i
\text{ a.s }
\end{eqnarray*}
Dividing each term in limit by $T_{i}^{\mathsf{(ne)}}(t)$ and applying RRT, we get
\begin{eqnarray*}
\gamma_i &=& \Bigg(\frac{\frac{\mathbf{M}_i^{(\mathsf{c})}}{W_i}}{\frac{\mathbf{M}_i}{W_i}}\Bigg) + \Bigg(1 - \frac{\frac{\mathbf{M}_i^{(\mathsf{c})}}{W_i}}{\frac{\mathbf{M}_i}{W_i}}\Bigg)l_i
\end{eqnarray*}
where $\mathbf{M}_i^{(\mathsf{c})}$ is the mean number of collided packets, and $\mathbf{M}_i$ is the mean number of transmitted packets in a renewal cycle. If we ignore packet discards, there is a single transmission by node $i$ in each renewal cycle so that $\mathbf{M}_i$ becomes unity. We can rewrite 
\begin{eqnarray}
\gamma_i &=& p_i +(1-p_i )l_i
\label{eqn:gamma}
\end{eqnarray}
where $p_i $ is the probability of packet collision and equals $\mathbf{M}_i^{(\mathsf{c})}$. Since $l_i$ is given for each node $i$, we need to compute probability of packet collision, $p_i$, to compute packet failure probability, $\gamma_i$.

We denote the set of nodes that can cause interference in successful reception at $r(i)$ by $I_{r(i)}$, which is composed of two sets:
\begin{center}
$C^{(1)}_{r(i)}=\{j\in\mathscr{N}: j\in\Omega_i\text{ and }j\in I_{r(i)}\}$\\
$C^{(2)}_{r(i)}=\{j\in\mathscr{N}: j \notin\Omega_i\text{ and }j \in I_{r(i)}\}$
\end{center}
such that
\begin{eqnarray*}
C^{(1)}_{r(i)}\cap C^{(2)}_{r(i)}&=&\emptyset\\ 
C^{(1)}_{r(i)}\cup C^{(2)}_{r(i)}&=&I_{r(i)} 
\end{eqnarray*}
Note that the receiver $r(i)$ is in $C^{(1)}_{r(i)}$ (see Figure~\ref{fig:interference}).
%\vspace{-4mm}
\begin{figure}[ht]
\begin{center}
	\includegraphics[scale=0.45]{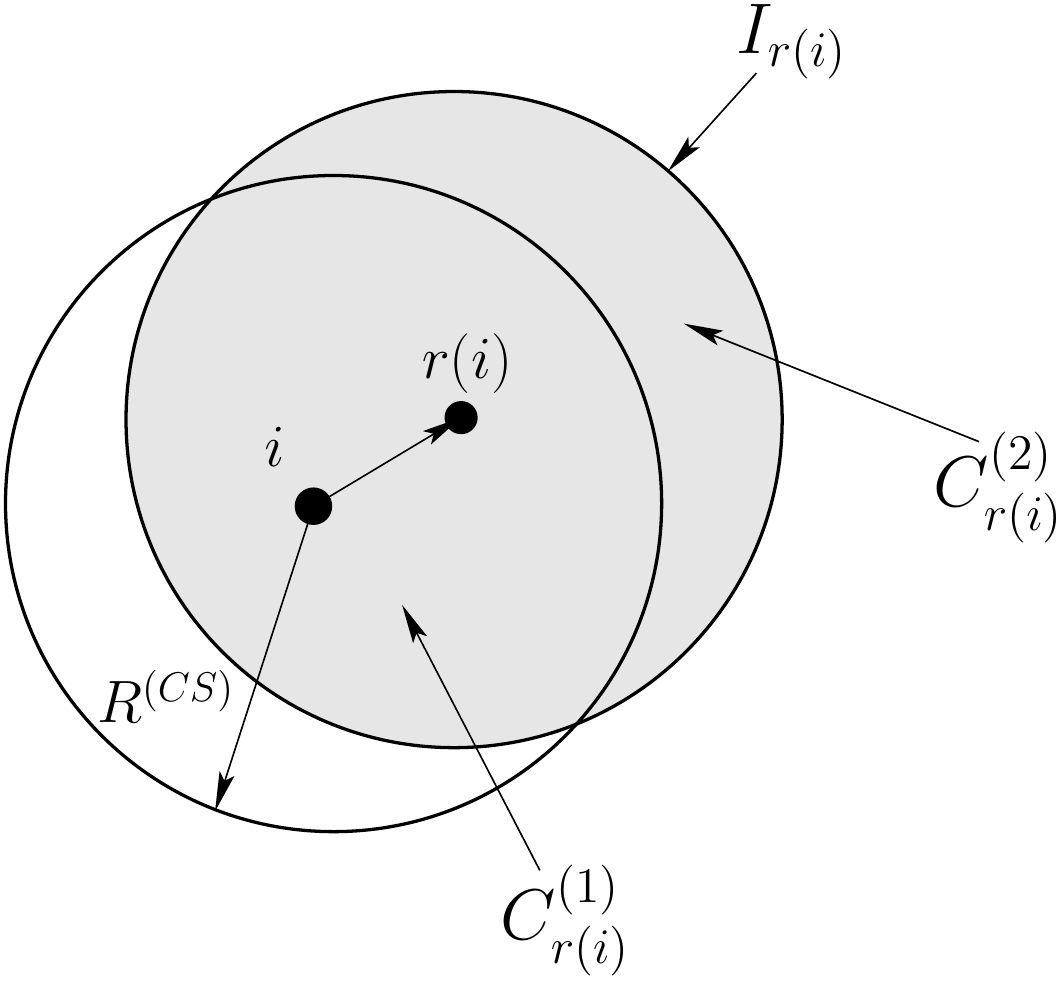}
    \caption{Interference region around the receiver. $R^{(CS)}$ is the CS range which is assumed to be equal for all nodes by \textbf{(S9)}.}
    \label{fig:interference}
	\vspace{-5mm}
\end{center}
\end{figure}

Let the reward be the mean number of collisions of node $i'$s packets in a renewal cycle, say, $\mathbf{M}_i^{(\mathsf{c})}$ , which can be written as
\begin{eqnarray*}
\mathbf{M}_i^{(\mathsf{c})} &=&  (1-\eta_i)(1-c_i)\mathbf{M}_i^{(\mathsf{c})} + \overline{\mathbf{M}}_i^{(\mathsf{c})}
\end{eqnarray*}
where $\overline{\mathbf{M}}_i^{(\mathsf{c})}$ is mean number of collisions of node $i'$s packets in a renewal cycle when node $i$ attempts a packet transmission or does a simultaneous channel sensing. On rearranging, the packet collision probability at node
$i$ is given by
\begin{eqnarray}
p_i &=& \frac{\overline{\mathbf{M}}_i^{(\mathsf{c})}}{1-(1-\eta_i)(1-c_i)}
\label{eqn:p}
\end{eqnarray}
To compute $\overline{\mathbf{M}}_i^{(\mathsf{c})}$, we first need to find the probability that nodes in a given set are not transmitting using a product approximation due to the unavailability of joint distribution of processes.  
Let $\overline{h}_i$ be the fraction of time node $i$ is not transmitting (unconditional) and is equal to 
\begin{eqnarray*}
\overline{h}_i &=& \lim_{t\rightarrow \infty}\frac{I_i(t)}{t} \quad \text{ a.s}\\
               &=& (1-q_i) + q_ib_i 
\end{eqnarray*}
The expression for $\overline{\mathbf{M}}_i^{(\mathsf{c})}$ can be found as the sum of several components as follows:
\begin{description}[leftmargin=0cm]
\item[1.] The first term accounts for the fact that Node $i$ started transmitting in the presence of at least one transmission by the hidden nodes set, i.e., $C^{(2)}_{r(i)}$. 
\begin{eqnarray*}
R_i^{(1)} &=& \eta_i\Bigg(1-\prod_{j\in C^{(2)}_{r(i)}}\overline{h}_j\Bigg)
\end{eqnarray*}
\item[2.] The second term accounts for the scenario when Node $i$ started its transmission as a simultaneous channel sensing with a node in $\Omega_i$ in the presence of at least one transmission by the hidden nodes set, i.e., $C^{(2)}_{r(i)}$.. 
\begin{eqnarray*}
R_i^{(2)} &=& (1-\eta_i)c_i\Bigg(1-\prod_{j\in C^{(2)}_{r(i)}}\overline{h}_j\Bigg)
\end{eqnarray*}
\item[3.] The third term considers the case where Node $i$ starts transmitting in the absence of any ongoing transmission by a hidden node but it encounters a simultaneous transmission by a node in $C^{(1)}_{r(i)}$ anywhere in the corresponding $12~T_s$ period and/or a transmission by a node in $C^{(2)}_{r(i)}$, i.e., a hidden node anywhere in its activity period $T_{\mathsf{tx}}$.
\begin{eqnarray*}
\scriptstyle{R_i^{(3)}} &\scriptstyle{=}& \scriptstyle{\eta_i\Bigg(\prod_{j\in C^{(2)}_{r(i)}}\overline{h}_j\Bigg)\Bigg(1-\exp \Bigg\{-12~T_s\Bigg( \displaystyle{\sum_{j\in C^{(1)}_{r(i)}}}\overline{\tau}_j^{(i)} \Bigg) \Bigg\}.} \\
&& \scriptstyle{\exp \Bigg\{-T_{\mathsf{tx}}\Bigg(\displaystyle{\sum_{j\in C^{(2)}_{r(i)}}}\overline{\tau}_j\Bigg)\Bigg\}\Bigg) }
\end{eqnarray*}
where $\overline{\tau}_j$ is the rate of successful CCA attempts of node $j$ over non-transmitting period, $I_j(t)$ and is equal to 
\begin{align*}
    \overline{\tau}_j &= \lim_{t \rightarrow \infty }\frac{N^{(\mathsf{cca})}_j(t) - N^{(\mathsf{f})}_j(t)}{I_j(t)} \quad \text{ a.s}\\
                      &= \frac{\beta_j b_j q_j (1 - \alpha_j)}{1 - q_j + q_jb_j}
\end{align*}
\item[4.] The fourth term says that Node $i$ started its transmission as a simultaneous channel sensing with a node in $C^{(1)}_{r(i)}$ in the absence of any hidden node. This event surely ends up in a collision at the receiver $r(i)$.
\begin{eqnarray*}
R_i^{(4)} &=& \Bigg(\frac{\displaystyle{\sum_{j\in C^{(1)}_{r(i)}}\overline{\tau}_j^{(i)}}}{\beta_i+\displaystyle{\sum_{j\in \Omega_i}\overline{\tau}_j^{(i)}}}\Bigg)c_i\Bigg(\prod_{j\in C^{(2)}_{r(i)}}\overline{h}_j\Bigg) 
\end{eqnarray*}
\item[5.] The final term says that Node $i$ started its transmission as a simultaneous channel sensing with a node in $\Omega_i \setminus C^{(1)}_{r(i)}$ in the absence of any hidden node but it encounters a simultaneous transmission by a node in $C^{(1)}_{r(i)}$ and/or by a node in $C^{(2)}_{r(i)}$, i.e., a hidden node.
\begin{align*}
R_i^{(5)} &= \Bigg(\frac{\displaystyle{\sum_{j\in \Omega_i\setminus C^{(1)}_{r(i)}}\overline{\tau}_j^{(i)}}}{\beta_i+\displaystyle{\sum_{j\in \Omega_i}\overline{\tau}_j^{(i)}}}\Bigg)c_i\Bigg(\prod_{j\in C^{(2)}_{r(i)}}\overline{h}_j\Bigg) \nonumber\\
& \bigg(1-\exp \Bigg\{-12~T_\mathsf{s}\Bigg( \displaystyle{\sum_{j\in C^{(1)}_{r(i)}}}\overline{\tau}_j^{(i)} \Bigg) \Bigg\} \nonumber\\
&\exp \Bigg\{-T_{\mathsf{tx}}\Bigg(\displaystyle{\sum_{j\in C^{(2)}_{r(i)}}}\overline{\tau}_j\Bigg)\Bigg\}\Bigg) 
\end{align*}
\end{description}
The overall packet collision probability is given by 
\begin{eqnarray}
p_i &=& \frac{R_i^{(1)}+R_i^{(2)}+R_i^{(3)}+R_i^{(4)}+R_i^{(5)}}{\eta_i+(1-\eta_i)c_i}									
\label{eqn:p_hidden}
\end{eqnarray}
The packet failure probability, $\gamma_i$, can now be calculated using Equations~\eqref{eqn:p_hidden} and \eqref{eqn:gamma}. 

\remark Note that for no hidden node network $C^{(2)}_{r(i)} = \phi$ and $\prod_{j\in C^{(2)}_{r(i)}}\overline{h}_j = 1$
 
%\subsection{Evaluation of Other Quantities}
\subsubsection{Average Service Rate ($\sigma_i$)}
Each packet that reaches the HOL position in the queue at a node can have multiple backoffs and transmissions before it is successfully transmitted or discarded. We define $\overline{Z}_i$ as the average time spent in backoff, and $\overline{Y}_i$ as the average transmission time until the packet is successfully transmitted or discarded at the MAC layer. Then the average service rate, $\sigma_i$, at node $i$ can be calculated as:
\begin{eqnarray}
\frac{1}{\sigma_i}=\overline{Z}_i + \overline{Y}_i
\label{eqn:sigma}
\end{eqnarray}
Using the default values from the standard, the mean backoff durations can be calculated (shown in Table~\ref{tab:mean_backoff_durations}).
\begin{table}[b]
\begin{center}
%\vspace{-5mm}
{\scriptsize
\begin{tabular}{ccc}
\hline
\hline
\textbf{Backoff} & \textbf{Mean Backoff Duration} & \textbf{Mean Backoff Duration}\\
\textbf{Exponent} & \textbf{with successful CCA($T_\mathsf{s}$)} & \textbf{with failed CCA($T_\mathsf{s}$)}\\
\hline
\hline
$3$ & $70+20$ & $70+8$\\
\hline
$4$ & $150+20$ & $150+8$\\
\hline
$5$ & $310+20$ & $310+8$\\
\hline
\hline
\end{tabular}
}
\caption{Mean backoff durations in symbol times, $T_\mathsf{s}$}
\label{tab:mean_backoff_durations}
%\vspace{-5mm}
\end{center}
\end{table}
Define the following quantities:
\begin{align*}
\scriptstyle{\overline{B}_i} &\scriptstyle{=} \scriptstyle{\left(70+8+158\alpha_i + 318\alpha_i^2 + 318\alpha_i^3 + 318\alpha_i^4\right)}\\
\scriptstyle{T_i^{(1)}}&\scriptstyle{=}\scriptstyle{\Bigg(\frac{(70+8)(1-\alpha_i)}{(1-\alpha_i^5)}+\frac{236\alpha_i(1-\alpha_i)}{(1-\alpha_i^5)}+\frac{554\alpha_i^2(1-\alpha_i)}{(1-\alpha_i^5)}}\\
&\scriptstyle{+\frac{872\alpha_i^3(1-\alpha_i)}{(1-\alpha_i^5)}+\frac{1190\alpha_i^4(1-\alpha_i)}{(1-\alpha_i^5)}\Bigg) }\\
\scriptstyle{T_i^{(2)}}&\scriptstyle{=}\scriptstyle{(78+158+318+318+318)}
\end{align*}
where $\overline{B}_i$ refers to the mean backoff duration until the packet is transmitted or discarded due to successive CCA failures, $T_i^{(1)}$ has the interpretation of the mean backoff duration given that the packet transmission was successful, and $T_i^{(2)}$ is the mean time spent in backoff given that the packet was discarded due to successive CCA failures. Then the quantities $\overline{Z}_i$ and $\overline{Y}_i$ can be calculated as:
\begin{eqnarray*}
\scriptstyle{\overline{Z}_i}&\scriptstyle{=}&\scriptstyle{\alpha_i^5T_i^{(2)}+(1-\alpha_i^5)[T_i^{(1)}+\gamma_i(\alpha_i^5T_i^{(2)}+(1-\alpha^5)} \\
&&\scriptstyle{[T_i^{(1)}+\gamma_i(\alpha_i^5T_i^{(2)}+(1-\alpha_i^5)[T_i^{(1)}+\gamma_i(\alpha_i^5T_i^{(2)}+(1-\alpha_i^5)T_i^{(1)})])])]}\\
\scriptstyle{\overline{Y}_i}&\scriptstyle{=}&\scriptstyle{(1-\alpha_i^5)[T_{\mathsf{tx}}+\gamma_i(1-\alpha_i^5)[T_{\mathsf{tx}}+\gamma_i(1-\alpha_i^5)[T_{\mathsf{tx}}+\gamma_i(1-\alpha_i^5)T_{\mathsf{tx}}]]]}
\end{eqnarray*}

\subsubsection{Aggregate Arrival Rate ($\nu_i$), Goodput ($\theta_i$) and Discard Probability ($\delta_i$) for a Node}
The arrival process at each node consists of packets which are generated at the same node (if it is a sensor) and the packets to be forwarded. The aggregate arrival rate at a node $i$ can be written as:
\begin{eqnarray}
\nu_i=\lambda_i+\sum_{k\in\mathcal{P}_i}\theta_k
\label{eqn:nu}
\end{eqnarray}
where $\lambda_i$ is the packet generation rate at sensor node $i$, $\mathcal{P}_i$ is the set of all its in-neighbours, and $\theta_i$ is the rate of packet transmission by the node which are \emph{successfully received} at $r(i)$ (known as goodput). An enqueued packet at a node is successfully received by the intended receiver in a manner shown in Figure~\ref{fig:goodput}.
\begin{figure}[h]
\begin{center}
	\includegraphics[scale=0.5]{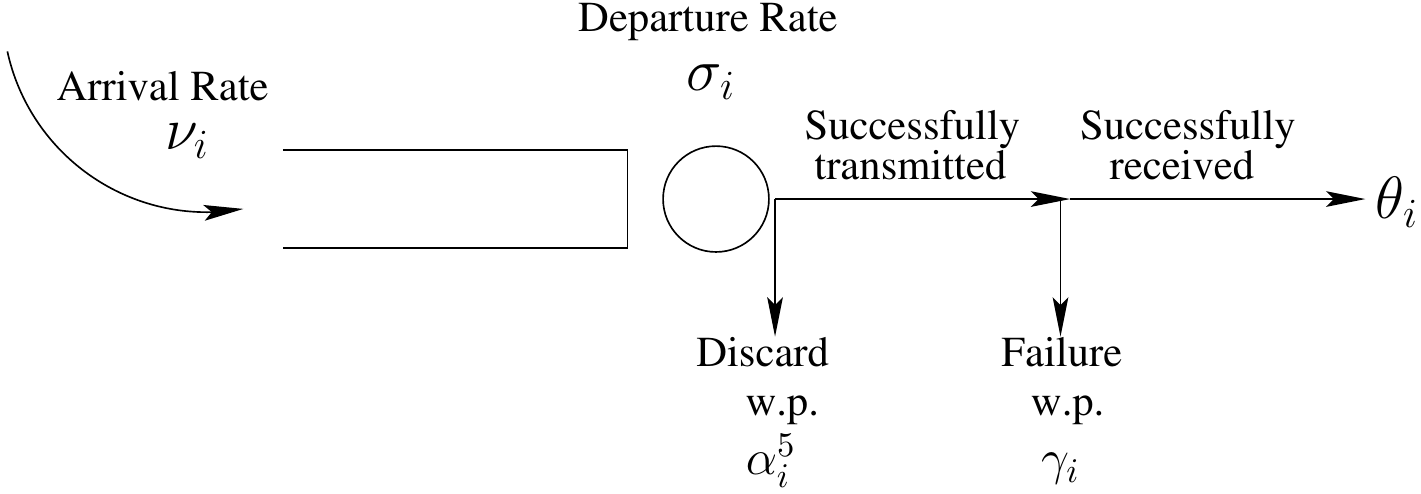}
	\caption{The goodput of a node is defined as the rate of successfully received packets by its receiver.}
	\label{fig:goodput}
	\vspace{-5mm}
\end{center}
\end{figure}

Assuming that the queueing system is stable and has a steady-state solution, the goodput $\theta_i$ is given by
\begin{eqnarray}
\theta_i=\nu_i(1-\delta_i)
\label{eqn:theta_arrival}
\end{eqnarray}
where $\delta_i$ is the probability of discarding a packet due to consecutive CCA failures, or successive failed retransmission attempts, and is given by
\begin{eqnarray}
\scriptstyle{\delta_i=\alpha_i^5+(1-\alpha_i^5)\gamma_i\Big[\alpha_i^5+(1-\alpha_i^5)\gamma_i\Big[\alpha_i^5+(1-\alpha_i^5)\gamma_i\Big[\alpha_i^5+(1-\alpha_i^5)\gamma_i\Big]\Big]\Big]}
\label{eqn:delta}
\end{eqnarray}
Note that if the queue at node $i$ is saturated, then the goodput $\theta_i$ is equal to $\sigma_i$. Note that while calculating $\sigma_i$, we have taken the packet discards into account. 

\subsubsection{The node non-empty probability, $q_i$}\label{non_empty_prob}
To find the expression for $q_i$, assuming that all the arriving packets reach the HOL position (i.e., no tail drops) and applying Little's Theorem, we get
\begin{eqnarray}
q_i=\frac{\nu_i}{\sigma_i}
\label{eqn:q_arrival}
\end{eqnarray}
Further, for a saturated node, the quantity $q_i$ is equal to $1$.

\subsubsection{Obtaining $b_i$ and $\beta_i$}
In order to find $b_i$, the fraction of time a node is in backoff provided it is non-empty, we embed a renewal process in conditional time where the renewal epochs are those instants at which the node enters the random backoff period after a packet transmission or packet discard. We use the RRT to find the expression for $b_i$ as 
\begin{eqnarray}
b_i=\frac{\overline{B}_i}{\overline{B}_i + (1-\alpha_i^5)T_{\mathsf{tx}}}
\label{eqn:calc_of_b}
\end{eqnarray}

The quantity $\beta_i$, the rate of CCA attempts in backoff times, is based on the backoff completion times irrespective of the transmission attempt of the packet since a packet retransmission is considered the same as a new packet transmission. Thus, for the calculation of $\beta_i$, it suffices to observe the process at node $i$ only in the backoff times. Figure~\ref{fig:backoff_times} shows the residual backoff process where after completion of the backoff duration, the node performs a CCA. 
\begin{figure}[t]
\begin{center}
\includegraphics[height=30mm, width=93mm]{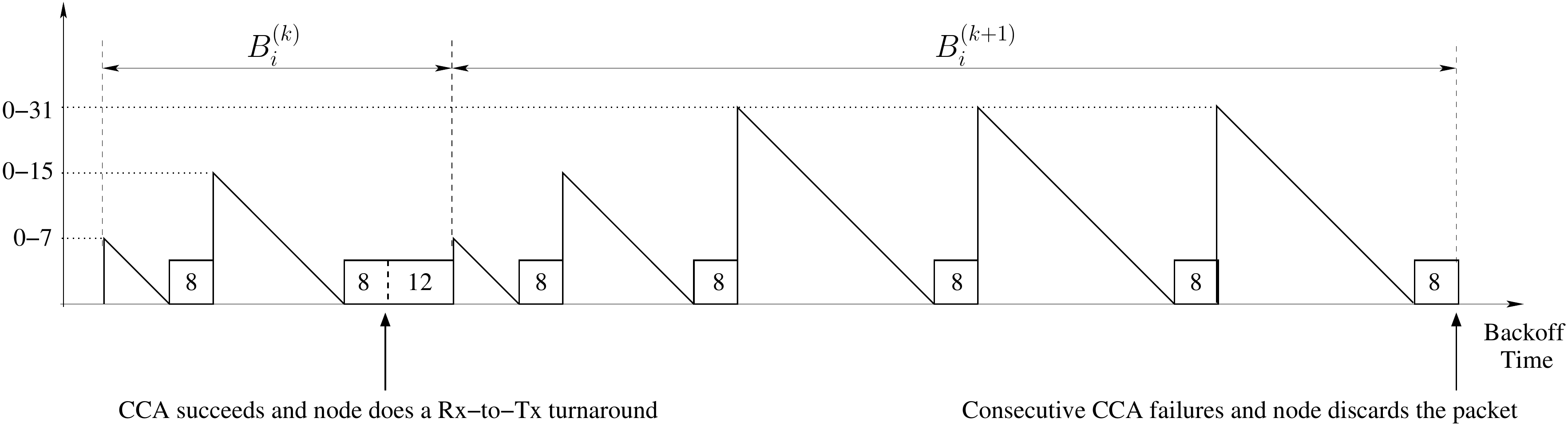}
\caption{Evolution of backoff periods in conditional (backoff) time. The $y$-axis is in units of a backoff slot ($20 ~T_\mathsf{s}$). The range indicates that the random backoff duration is uniformly distributed within it. The CCA duration, and Rx-to-Tx turnaround duration are 8 and 12 symbol times respectively, as indicated in the boxes.}
\label{fig:backoff_times}
\vspace{-5mm}
\end{center}
\end{figure}
Using the result of Kumar et al. \cite{kumar-etal04new-insights} with their collision probability replaced by our CCA failure probability, $\alpha_i$, the expression for $\beta_i$ can be written as 
\begin{eqnarray}
\beta_i=\frac{1+\alpha_i+\alpha_i^2+\alpha_i^3+\alpha_i^4}{\overline{B}_i}
\label{eqn:beta}
\end{eqnarray}

%----------------Iterations---------------------------------------%
\section{Iterative Solution and Calculation of Performance Measures}
\label{sec:iterative-scheme}
\begin{figure}[ht]
\begin{center}
	\includegraphics[scale=0.38]{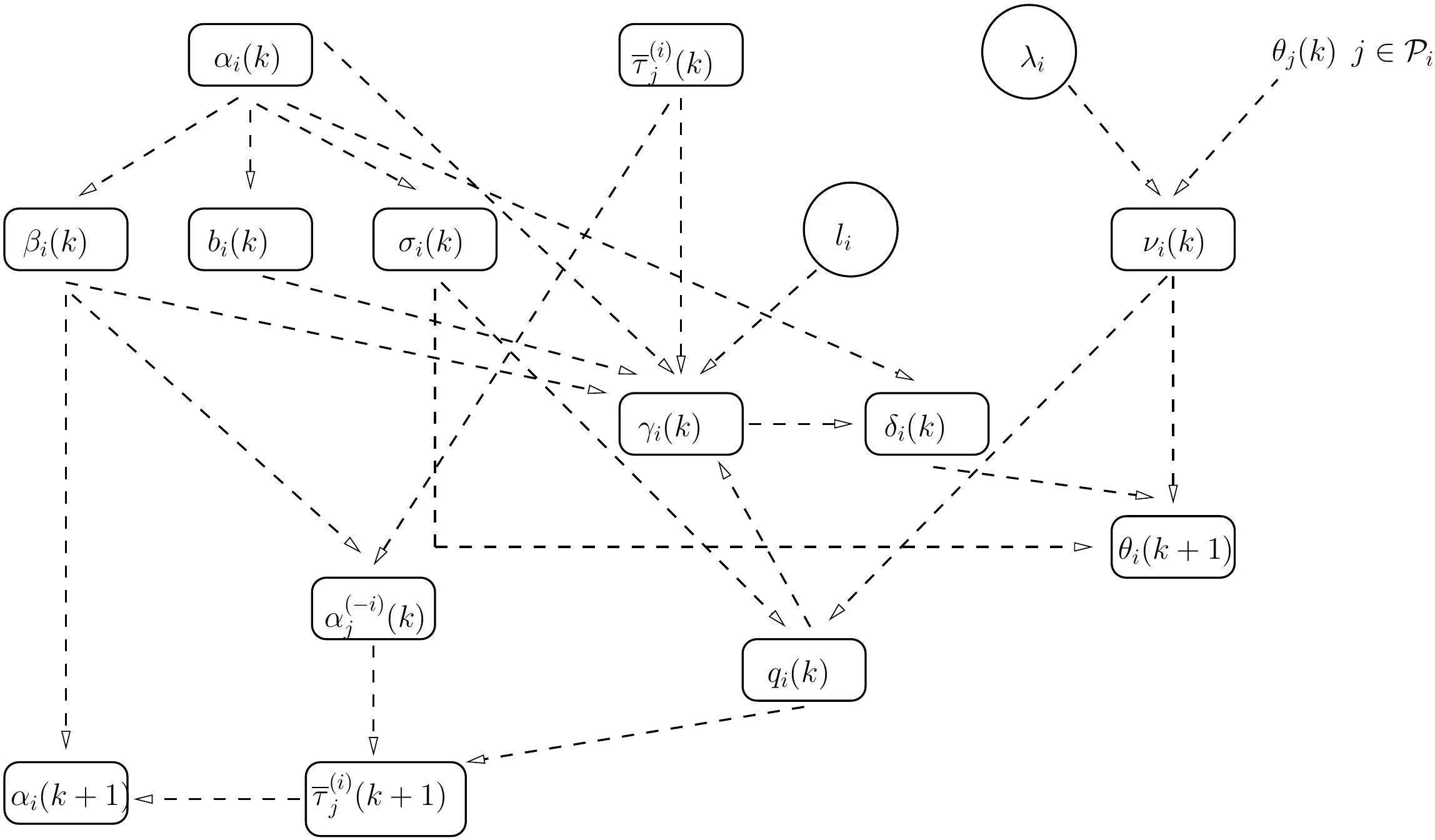}
	\caption{The global iteration scheme. The square boxes indicate the derived quantities and the round boxes indicate the known quantities.}
	\label{fig:iterations}
	\vspace{-8mm}
\end{center}
\end{figure}

\subsection{Global Iteration Scheme}\label{sec:iterations}
The global iteration scheme to solve the fixed point equations derived above is shown in Figure~\ref{fig:iterations}. We start with a vector $\{0,10,\lambda_i,l_i\}$ corresponding to $\{\alpha_i,\overline{\tau}_j^{(i)},\nu_i,l_i\}$ for each node $i$ and node $j \in \Omega_i$, and repeat the procedure until the quantities converge. 

\subsection{Existence of a fixed point}
A careful look at the derivation of the fixed point equations in Section~\ref{subsec:fixed-pt-derivation} reveals that the fixed point variables are $\{(\alpha_i,q_i)\}_{i=1}^N$, where $\alpha_i$ is the CCA failure probability at node $i$, and $q_i$ is the queue non-empty probability at node $i$. These variables are related via the set of equations~\eqref{eqn:1}, \eqref{eqn:6}, \eqref{eqn:phi_eqn_1}, \eqref{eqn:phi_eqn_2}, \eqref{eqn:alpha}, \eqref{eqn:gamma}, \eqref{eqn:p_hidden}, and \eqref{eqn:sigma}-\eqref{eqn:beta}. We observe that all the functions involved in the fixed point equations are continuous (since compositions, sums, and products of continuous functions are continuous, and also minimum of continuous functions is continuous). Hence, the fixed point equations can be represented as a \emph{continous} map from $[0,1]^{2N}$ to $[0,1]^{2N}$. Hence, the existence of a solution to the fixed point equations follows from Brouwer's fixed point theorem.

Proving the uniqueness of the fixed point, and convergence of the iterative procedure to the fixed point is out of the scope of this work. For a detailed discussion of the complexity associated with these proofs for general multi-hop topologies, see Section IV.D in \cite{jindal09}. To the best of our knowledge, the only work to have attempted a formal proof of the uniqueness of the fixed point in the context of multi-hop networks is \cite{marbach11}, but that too for a rather simplified version of CSMA policies. However, \emph{in our numerical experiments, the fixed point iterations always converged to a solution}. 

%-----------------Delay calculation---------------------------------%

\subsection{End-to-End Delay Calculation}

We use the two parameter Whitt's QNA \cite{whitt83queueing-network-analyzer} to calculate the mean sojourn time at each node based on an approximate Moment Generating Function (MGF) of service time. We can write an expression for service time in a recursive manner by allowing infinite CCAs and retransmissions. To account for the discarded packets, we allow only a fraction of transmitted packets to join the next-hop neighbour's queue based on the discard probability at that node. We denote the service time at node $i$ by $S_i$, and let $B_i$ denote the length of random backoff duration, which is assumed to have an exponential distribution with rate $\beta_i$ for node $i$. Then, $S_i$ can be written as
\begin{eqnarray*}
S_i &=& \begin{cases}
			B_i + \tilde{S}_i & \text{w.p. } \alpha_i \\
			B_i + T_{\mathsf{tx}} & \text{w.p. } (1-\alpha_i)(1-\gamma_i) \\
			B_i + T_{\mathsf{tx}} + \tilde{S}_i & \text{w.p. } (1-\alpha_i)\gamma_i
		\end{cases}
\end{eqnarray*}
where $\tilde{S}_i$ is a random variable with the same distribution as $S_i$. The MGF of $B_i$, denoted by $M_{B_i}(z)$, is equal to $\frac{\beta_i}{z+\beta_i}$. Therefore, we can express the MGF $M_{S_i}(z)$ of service time $S_i$ as 
\begin{eqnarray*}
M_{S_i}(z) &=& \frac{\beta_i(1-\alpha_i)(1-\gamma_i)e^{-zT_{\mathsf{tx}}}}{z+\beta_i(1-\alpha_i)(1-\gamma_ie^{-zT_{\mathsf{tx}}})}.
\end{eqnarray*}
The first two moments of the service time, $\mathbb{E}(S_i)$ and $\mathbb{E}(S_i^2)$ can be calculated by differentiating the MGF $M_{S_i}(z)$:
\begin{eqnarray*}
\mathbb{E}(S_i) &=& -\frac{d(M_{S_i}(z))}{dz}\Bigg \vert_{z=0} \\
\mathbb{E}(S_i^2) &=& \frac{d^2(M_{S_i}(z))}{dz^2}\Bigg \vert_{z=0} 
\end{eqnarray*}
The QNA procedure commences from the leaf nodes and converges to the base-station in a sequential manner. We first calculate the squared coefficient of variance of service time, denoted by $c_{S_i}^2$, at each node, i.e.,
\begin{eqnarray*}
c_{S_i}^2 &=& \frac{\mathbb{E}(S_i^2)}{(\mathbb{E}(S_i))^2} - 1
\end{eqnarray*}
Ignoring the packet discards at the MAC layer, the net arrival rate $\Lambda_i$ at a node $i$ is given by:
\begin{eqnarray*}
\Lambda_i &=& \begin{cases}
					\lambda_i + \displaystyle{\sum_{k\in \mathcal{P}_i}}\theta_k & \text{for source-cum-relay nodes} \\
					\displaystyle{\sum_{k\in \mathcal{P}_i}}\theta_k & \text{for relay nodes}
			  \end{cases}
\end{eqnarray*}
For any node $i$, let $\rho_i$ be defined as
\begin{eqnarray*}
\rho_i &=& \begin{cases}
					\lambda_i\mathbb{E}(S_i) & \text{for leaf nodes} \\
					\Lambda_i\mathbb{E}(S_i) & \text{for internal nodes}
			  \end{cases}
\end{eqnarray*}
We now calculate the squared coefficient of variance for the departure process ($c_{D_i}^2$) for a node which uses the squared coefficient of variance for the interarrival times at that node ($c_{A_i}^2$). Here we include the probability of a packet discard which is equal to $\delta_i$.
\begin{eqnarray*}
c_{D_i}^2 &=& (1 - \delta_i)(1 + \rho_i^2(c_{S_i}^2 - 1) + (1-\rho_i^2)(c_{A_i}^2 - 1))
\end{eqnarray*}
Further, $c_{A_i}^2$ is calculated as 
\begin{eqnarray*}
c_{A_i}^2 &=& \frac{1}{\Lambda_i}\Bigg(\lambda_i + \displaystyle{\sum_{j\in \mathcal{P}_i}}\Lambda_jc_{D_j}^2\Bigg)
\end{eqnarray*}
Finally, the mean sojourn time at a node $i$ is given by
\begin{eqnarray}
\overline{\Delta}_i &=& \frac{\rho_i\mathbb{E}(S_i)(c_{A_i}^2 + c_{S_i}^2)}{2(1-\rho_i)} + \mathbb{E}(S_i)
\end{eqnarray}
The end-to-end mean packet delay for a source node $j$, provided that the set of nodes along the path from this node to the BS is $L_j$, is given by
\begin{eqnarray}
\Delta_j &=& \displaystyle{\sum_{i\in L_j}\overline{\Delta}_i}
\end{eqnarray}

%----------------Packet delivery probability-----------------------%

\subsection{Packet Delivery Probability for each Source Node}
$p_i^{(del)}$ for each source node $i$ is defined as the fraction of generated packets at source node $i$ that reach the base station without any time bound. Let the set of nodes constituting the path from a source node $i$ to the BS be $L_i$. Assuming that the drop events are independent from node to node, the expression for $p_i^{(del)}$ is
\begin{eqnarray}
p_i^{(del)} &=& \prod_{j\in L_i}(1-\delta_j)
\end{eqnarray}
where $\delta_j$ is the packet discard probability of node $j$. 

\section{Discussion on Validity of the Fixed Point Approach}
\label{sec:dtmc-stability}
We recall that we developed the analysis under the premise that the system of queues is stable, and, hence, all the steady state quantities involved in the fixed point equations exist. The next question we ask is whether, having performed the above analysis, we can use the results to conclude that the system of queues is indeed stable, which would provide a consistency in the overall approach.

We proceed by modeling the CSMA/CA multihop network analyzed in Section~\ref{sec:analytical-model} as a Discrete Time Markov Chain (DTMC), and therefrom, deriving a sufficient condition for the stability of the network.  

\subsection{A DTMC Model}
\label{subsec:dtmc}
The system evolves \emph{synchronously} over slotted time (this is an idealization only for the analysis in this section), with the duration of each slot being 1 symbol time \footnote{Note that in a CSMA/CA network, all durations (e.g., backoff, transmission, CCA) are multiples of the symbol time}. We assume that the \emph{start instants of all backoffs are aligned with the slot boundaries}. Further, the external arrival process to each source node is assumed to be Poisson; hence the number of arrivals in successive slots are independent (independent increment property). The evolution of the queueing system can then be modeled as a DTMC embedded at these slot boundaries, with the state of each queue comprising of the following components:

\begin{enumerate}
\item Queue length at each node
\item Residual backoff at each node 
\item Residual CCA time at each node 
\item Backoff stage (i.e., the number of CCA attempts for the current packet) at each node
\item Transmission state (whether the node is transmitting a packet or not) at each node
\item Retransmission stage (the number of retransmissions so far for the current packet) at each node
\item Residual packet length when a node is transmitting (in symbol times)
\end{enumerate}

Note that except the queue lengths, all the other components of the state space are \emph{finite}, while the state space of the queue lengths is countable. Hence, the state space is countable. We denote the state at time step $n$ by $(\X(n),\Y(n))$, where $\X(n)=(X_1(n),\ldots,X_N(n))$ denotes the queue length process, and $\Y(n)$ denotes the rest of the components of the state taken together. Thus, $\X(n)\in\mathbb{Z}^n_+$, and $\Y(n)\in K_y$, where $K_y$ is a finite set of finite valued vectors.  

We adopt the following convention from \cite{panda} (see Figure 4 in Section~3.1 in \cite{panda}) for counting the queue lengths at each time step: all arrivals that occur during a slot are counted immediately before the end boundary of the slot, all packets that leave the queues during a slot are counted at the end boundary of the slot, and the queue lengths are computed immediately after the end boundary of the slot, so as to account for all the arrivals and departures during the slot. 

With this setup, the queueing system evolves as a DTMC over the state space $\mathbb{Z}^n_+\times K_y$. The transition probabilities of the DTMC are governed by the distribution of the arrival process, and the backoff distribution. Note that a similar DTMC model was proposed in \cite{kumar-etal04new-insights} for IEEE~802.11 CSMA/CA under saturation assumption.

Now note that the operation of the system starts with all the queues empty, i.e., in the state $(\X(0)=\0,\Y(0)=\0)$. \emph{We are, therefore, only interested in the communicating class of the DTMC containing the all-zero state}, $(\X(0)=\0,\Y(0)=\0)$. Let us denote this class by $\mathcal{C}_{0}$. 

\begin{proposition}
\label{prop:irreducible}
The class $\mathcal{C}_{0}$ is closed, and aperiodic. 
\end{proposition}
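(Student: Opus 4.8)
\emph{Proof plan.} The plan is to treat the two assertions separately, exploiting the fact that both aperiodicity and the period itself are class properties: it therefore suffices to verify aperiodicity at the single state $(\0,\0)$, and to establish closedness through a drainage argument showing that every state reachable from $(\0,\0)$ can return to it.

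First I would prove aperiodicity by exhibiting a self-loop at the all-zero state. Starting from $(\X=\0,\Y=\0)$, all queues are empty and all nodes are idle. Since the external arrival processes at the source nodes are Poisson, the probability of zero arrivals at \emph{every} source node during one slot is strictly positive; on this event no node acquires a packet, no backoff is initiated, and the chain remains in $(\0,\0)$. Hence $P((\0,\0),(\0,\0))>0$, so the state $(\0,\0)$ has period $1$. As every state of $\mathcal{C}_0$ communicates with $(\0,\0)$, the class $\mathcal{C}_0$ is aperiodic.

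For closedness I would argue that the set of states accessible from $(\0,\0)$ coincides with $\mathcal{C}_0$; since accessibility is transitive, such a set is automatically closed. By definition every state of $\mathcal{C}_0$ is accessible from $(\0,\0)$, so it remains only to prove the converse communication, namely that from every accessible state the chain can return to $(\0,\0)$ with positive probability in finitely many steps (the \emph{drainage lemma}). To prove it I would condition on the positive-probability event that (i) no external arrival occurs for a sufficiently long but finite block of slots, and (ii) the backoff, CCA, and channel-noise outcomes are favourable. Under (i) no new work enters the network, so the total number of packets present (a finite number at any accessible state) is non-increasing. Exploiting the tree topology, each packet has a route of finite length to the BS along which it can only move forward; with positive probability its (bounded) backoff completes, its CCA succeeds — which happens whenever no node in its carrier-sense set is transmitting, an event we can force by letting nodes transmit one at a time — and its transmission is received without collision and without noise error (the latter having probability $1-l_i>0$). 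Thus each packet either advances one hop toward the BS, and eventually exits the network there, or is discarded once it exhausts its CCA/retransmission limits; in either case it leaves the system in finitely many slots. Since the tree has finite depth and the total packet count is finite and non-increasing, after finitely many slots all queues are empty and every node is idle, i.e.\ the chain is in $(\0,\0)$, and the drainage lemma follows. Hence $\mathcal{C}_0$ is closed.

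The main obstacle is the drainage lemma: one must rule out both that channel contention deadlocks the emptying of the queues and that forwarding causes packets to circulate indefinitely. Both are resolved by the tree structure, which makes packet progress monotone toward the BS, and by the strict positivity of a single favourable transmission (collision-free CCA together with success probability $1-l_i>0$), which lets me construct a positive-probability finite schedule in which nodes drain their queues one at a time from the leaves toward the BS.
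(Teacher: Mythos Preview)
Your proposal is correct and follows essentially the same approach as the paper: aperiodicity via the positive-probability self-loop at $(\0,\0)$ (no Poisson arrivals in a slot), and closedness via a drainage argument showing that every state reachable from $(\0,\0)$ can return to it when external arrivals are shut off and the bounded CCA/retry counters guarantee each packet eventually leaves. The only cosmetic differences are that the paper wraps closedness in a short proof by contradiction rather than your direct ``accessible set is closed and equals $\mathcal{C}_0$'' packaging, and that your drainage argument is more explicit (tree monotonicity, one-at-a-time schedule) where the paper simply invokes finiteness of the retry and CCA-failure limits.
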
 

\begin{proof}
See the Appendix.
\end{proof}
It follows from Proposition~\ref{prop:irreducible} that the DTMC evolving from the all-zero state is confined to the class $\mathcal{C}_{0}$. \emph{With abuse of notation, from now on, we denote by $(\X(n),\Y(n))$, the DTMC evolving from the all-zero state.}
\subsection{A Sufficient Condition for Network Stability}
\label{subsec:stability-conditions}
Since we have a DTMC, by ``stability'', we mean that the DTMC $(\X(n),\Y(n))$ is \emph{positive recurrent}. Then, since $(\X(n),\Y(n))$ is also aperiodic, it can be verified that the steady state rates introduced in Section~\ref{sec:analytical-model} \emph{exist}. Therefore, we shall derive conditions for positive recurrence of the DTMC.

Note that the queue non-empty probability of queue $i$, $1\leq i\leq N$, in the steady state is\footnote{under the condition of aperiodicity, this limit always exists} 
\begin{equation*}
q_i = \lim_{n\to \infty}\prob[X_i(n) > 0|\X(0)=\mathbf{0},\Y(0)=\mathbf{0}]
\end{equation*}

We have the following sufficient condition on $q_i$, $1\leq i\leq N$, for positive recurrence of the DTMC $(\X(n),\Y(n))$.

\begin{theorem}
\label{thm:stability}
If $\sum_{i=1}^N q_i\:<\:1$, then the DTMC $(\X(n),\Y(n))$ (evolving from the all-zero state) is positive recurrent. 
\end{theorem}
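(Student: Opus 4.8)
The plan is to sidestep a drift (Foster--Lyapunov) argument and instead exploit the standard dichotomy for irreducible aperiodic chains: such a chain is positive recurrent if and only if $\lim_{n\to\infty}P^n(s_0,s)>0$ for some state $s$, whereas in the null recurrent and transient cases this per-state limit vanishes for every $s$. By Proposition~\ref{prop:irreducible} the class $\mathcal{C}_0$ is closed and aperiodic, and since $\mathcal{C}_0$ is a communicating class the chain confined to it is irreducible; consequently $\lim_n P^n(s_0,s)$ exists for each fixed $s$, where $s_0=(\0,\0)$ is the all-zero state. The whole argument then reduces to one claim: under the hypothesis $\sum_i q_i<1$, the all-empty configuration $\X(n)=\0$ retains strictly positive probability in the limit. (In everything below, all probabilities are conditioned on $\X(0)=\0,\Y(0)=\0$.)

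First I would write the union bound
\[
\prob[\X(n)=\0] \;\geq\; 1-\sum_{i=1}^N \prob[X_i(n)>0],
\]
and let $n\to\infty$. Since each $q_i=\lim_n \prob[X_i(n)>0]$ exists (by aperiodicity, as already noted) and the sum over $i$ is finite, the right-hand side tends to $1-\sum_i q_i$, which is strictly positive by hypothesis. Hence $\liminf_n \prob[\X(n)=\0]\geq 1-\sum_i q_i>0$.

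Next I would use the finiteness of $K_y$ to localise this mass on a single state. Writing $\prob[\X(n)=\0]=\sum_{\mathbf{y}\in K_y}P^n\bigl(s_0,(\0,\mathbf{y})\bigr)$ as a finite sum, and noting that every term converges by aperiodicity, the limit of the sum equals the sum of the limits; positivity of that limit forces $\lim_n P^n\bigl(s_0,(\0,\mathbf{y}^\star)\bigr)>0$ for at least one $\mathbf{y}^\star\in K_y$. Invoking the dichotomy above, a state with strictly positive limiting probability can arise only for a positive recurrent chain, so $(\X(n),\Y(n))$ is positive recurrent, as claimed.

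The substantive step is the localisation in the third paragraph: recognising that the equilibrium-flavoured condition $\sum_i q_i<1$ is precisely what keeps the empty configuration carrying positive limiting mass, after which the finiteness of the $\Y$-component does the remaining work. The only points that need care are the interchange of limit and summation (immediate here, because the outer sum runs over the finite set $K_y$ and each $q_i$-limit is assumed to exist) and the explicit appeal to the irreducibility of $\mathcal{C}_0$, which is what licenses quoting the per-state limiting-probability characterisation of positive recurrence.
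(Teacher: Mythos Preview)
Your argument is correct and follows essentially the same route as the paper's proof: the union bound on $\prob[\X(n)=\0]$, passage to the limit to obtain $\lim_n\prob[\X(n)=\0]\ge 1-\sum_i q_i>0$, and then an appeal to Proposition~\ref{prop:irreducible} to conclude positive recurrence. Your version is in fact slightly more careful than the paper's, which stops at the positivity of $\lim_n\prob[\X(n)=\0]$ without explicitly localising the mass to a single state $(\0,\mathbf{y}^\star)$; your use of the finiteness of $K_y$ to pass from the event $\{\X(n)=\0\}$ to a particular state is exactly the missing link that makes the invocation of the per-state limiting-probability criterion airtight.
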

\begin{proof}
See the Appendix. 
\end{proof}

\gap
\noindent
\textbf{Discussion:}
We began this section by asking the question whether the results from our analytical model can be used to verify the initial assumption of stability. Since the analysis involves many approximations, evidently no definite answer can be given to this question. However, we have found from extensive simulations (see Section~\ref{sec:numerical-and-simulation-results} for details of the simulation procedure) that the solution to the fixed point analysis models $\sum_{i=1}^N q_i$ accurately (\emph{within an error of up to 10\%}) for external arrival rates of up to about 6 packets/sec (see Figure~\ref{fig:measure_sum_qi} for example). 
\begin{figure}[ht]
\begin{center}
	\includegraphics[scale=0.32]{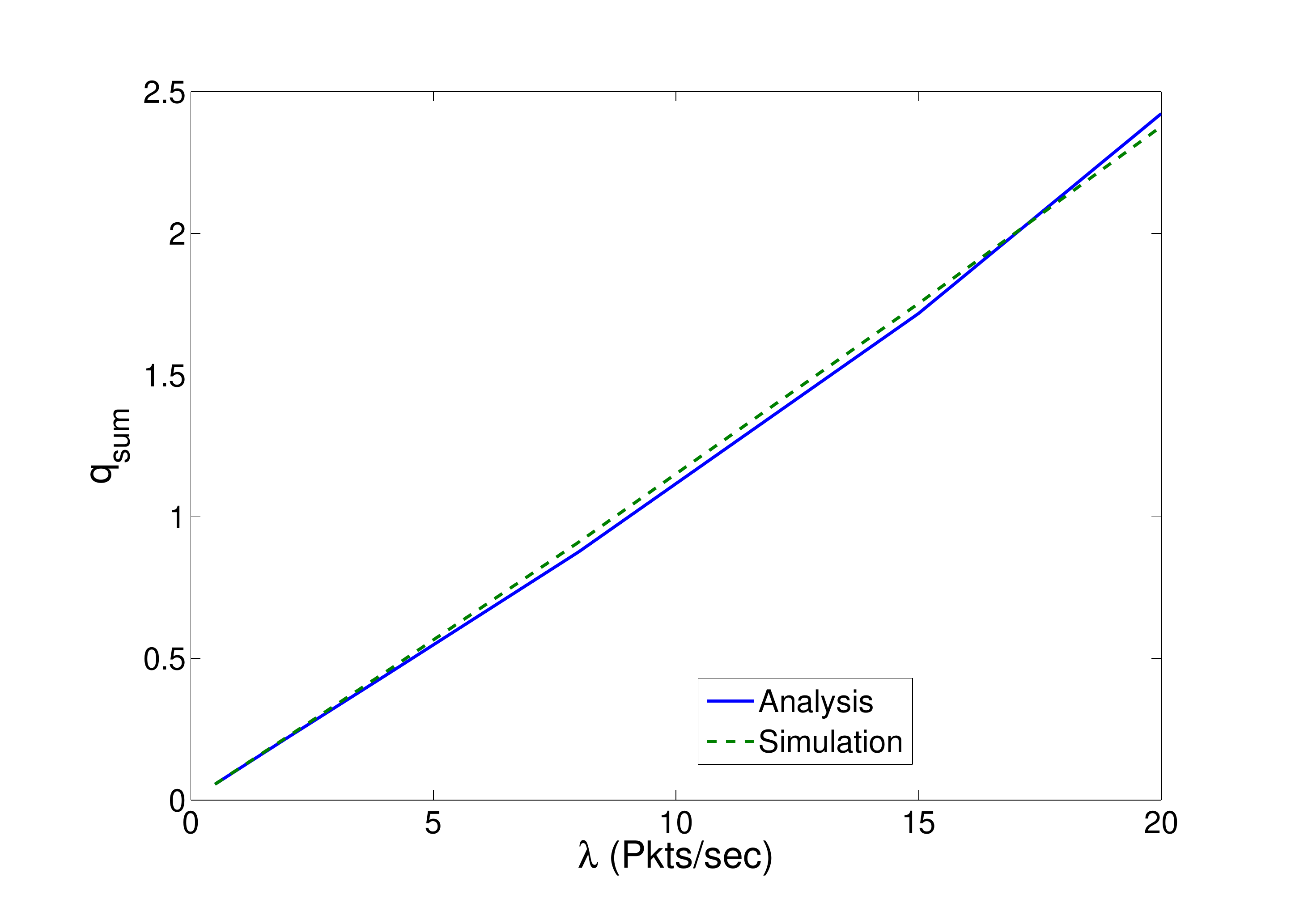}
	\caption{Plot of measure $\sum_{i=1}^N q_i$ for the topology in Figure~\ref{fig:hidden_10nodes}}
	\label{fig:measure_sum_qi}
	\vspace{-8mm}
\end{center}
\end{figure}
This observation along with Theorem~\ref{thm:stability} can support the claim that if $\sum_{i=1}^N q_i < 0.9$ in the solution to the fixed point analysis, then we can safely assume the network to be stable.

%--------------Numerical and simulation results---------------------------%

\section{Numerical and Simulation Results} \label{sec:numerical-and-simulation-results}

For the verification of our analytical model, we use QualNet (v4.5) simulator \cite{qualnet} with the default parameter values and a fixed payload size of $70$ bytes. However, the QualNet implementation is devoid of ACKs. Therefore, we compare results only for the ACK\emph{-less} scenarios (although the analysis permits the modeling of ACKs). We use the following simulation models in Qualnet to declare that a packet is in error:
\begin{enumerate}
      \item Collision and Link Error Model: If a receiver is receiving a packet from a node, and there is another transmission in the carrier sense range of the receiver, then the receiver concludes collision of all the packets that are intended for it. In addition, if a receiver $r(i)$ of node $i$ receives a packet successfully from node $i$ since there is no other packet transmission in the receiver's carrier sense range, then the received packet is concluded to be in error with probability $l_i$, the probability of data packet error on the link between node $i$ and the receiver $r(i)$ due to noise.
     \item Capture Model: If a receiver is receiving a packet from a node, and there is another transmission in the carrier sense range of the receiver, then the receiver computes the PER (from the SINR) of the packet that is meant for it, and rejects the packet if the computed PER is greater than a random number generated between 0 and 1.
\end{enumerate}
To increase the accuracy of simulation for each arrival rate, we generate the packets for 1500 seconds, and average the results over 25 repetitions with different random number seeds. 1500 seconds was chosen so that even at lower arrival rates ($\leq$ 2pkt/sec), sufficient number of packets would be generated from each source, thereby allowing the simulation to reach steady state.

\subsection{Simulation of an Example Network Topology}
The network shown in Figure~\ref{fig:hidden_10nodes} has hidden nodes (the dependency graph is also shown). Note that all the nodes are sources (some of which also serve as relays) with identical packet generation rates that are simultaneously increased. 
Figures~\ref{fig:measure_alpha_for_tree-n10-CS3-PER0.01}, \ref{fig:measure_gamma_for_tree-n10-CS3-PER0.01}, \ref{fig:measure_delta_for_tree-n10-CS3-PER0.01}, and \ref{fig:measure_pdel_for_tree-n10-CS3-PER0.01} are plots of measures $\alpha_i$, $\gamma_i$,  $\Delta_i$ and $p^{(\mathsf{del})}_i$ for the nodes in Figure~\ref{fig:hidden_10nodes}.  

\begin{figure}[ht]
\begin{center}
	\includegraphics[scale=0.32]{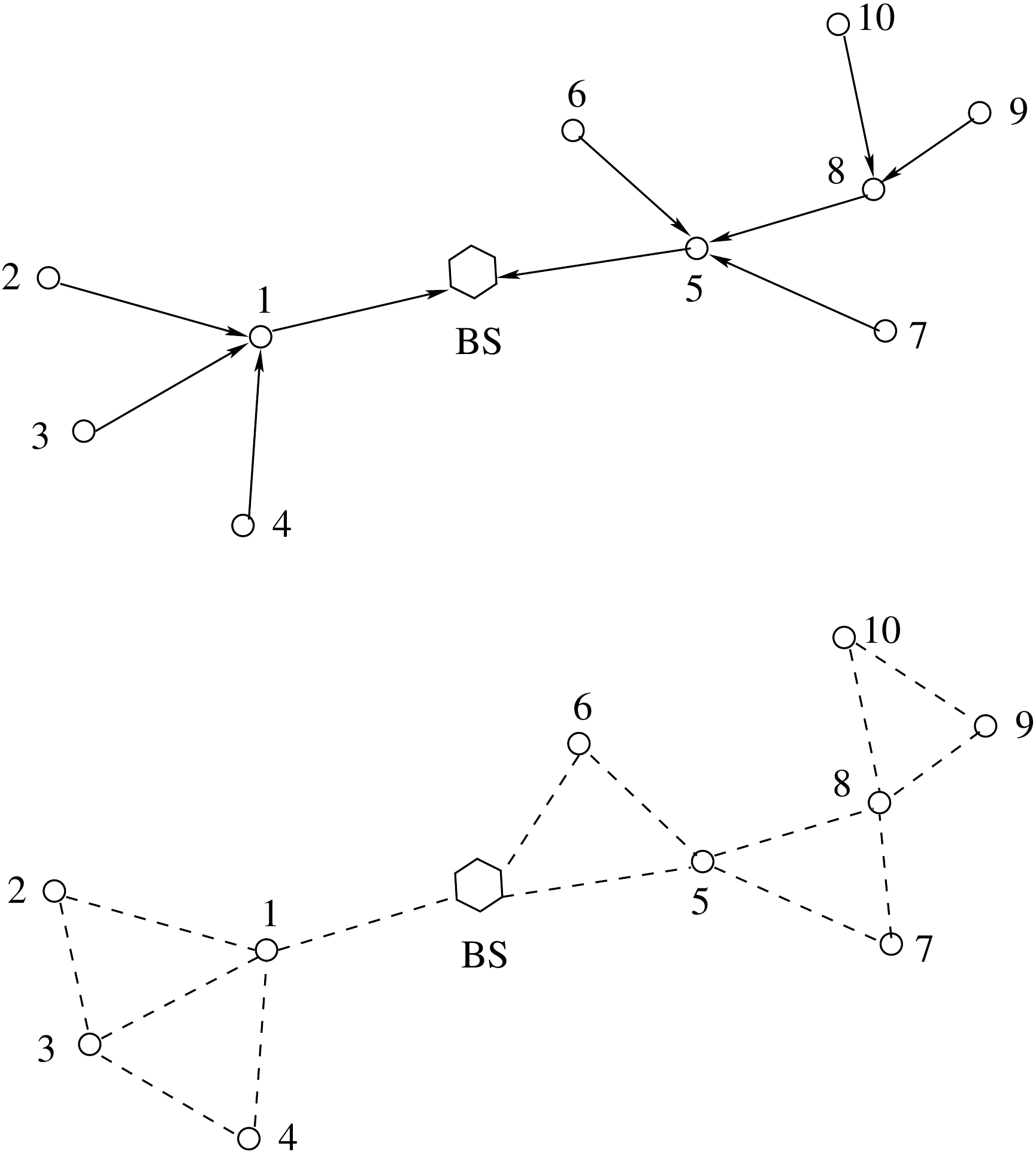}
	\caption{A 10 nodes TREE topology. Also shown is the dependency graph where the dotted lines connecting two nodes indicate that the nodes are in CS range of each other.}
	\label{fig:hidden_10nodes}
	\vspace{-8mm}
\end{center}
\end{figure}

\begin{figure}[ht]
\begin{center}
	\includegraphics[height=7cm, width=9.5cm]{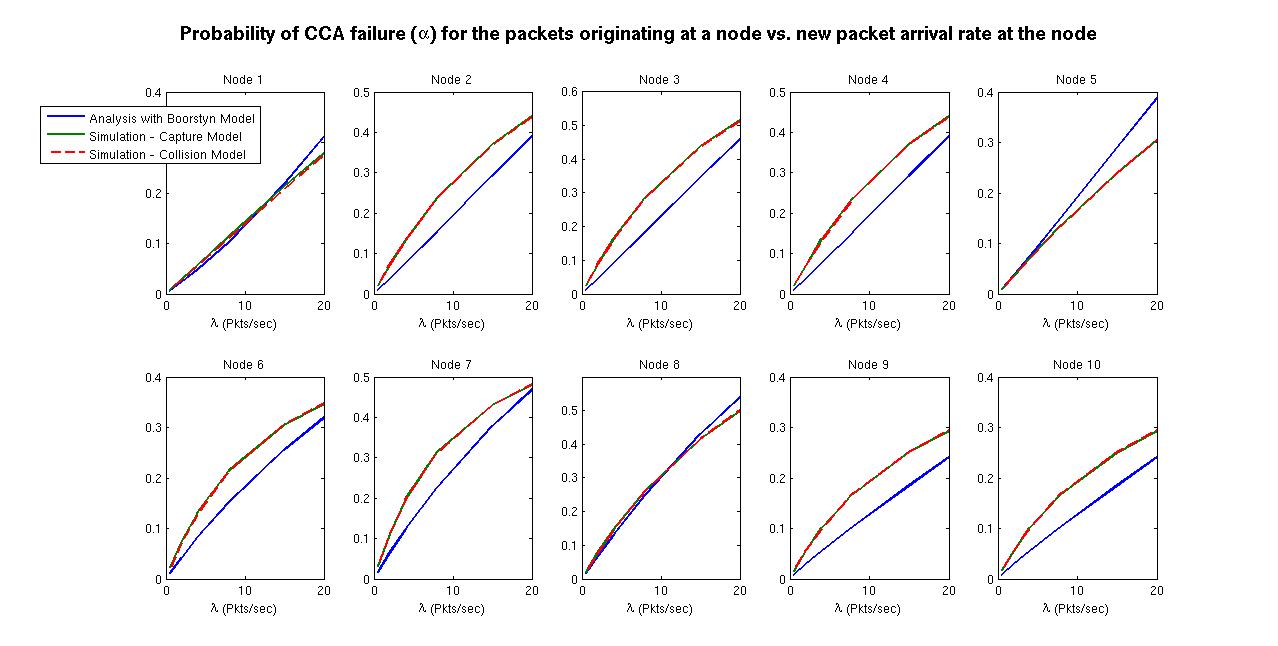}
	\caption{Plots of CCA failure rate, $\alpha$, for the nodes in Figure~\ref{fig:hidden_10nodes}; blue solid line indicates analytical results, green solid line indicates simulation with Capture Model, and Red dotted line indicates simulation with Collision + PER Model}
\label{fig:measure_alpha_for_tree-n10-CS3-PER0.01}
\end{center}
\end{figure}

\begin{figure}[ht]
\begin{center}
	\includegraphics[height=7cm, width=9.5cm]{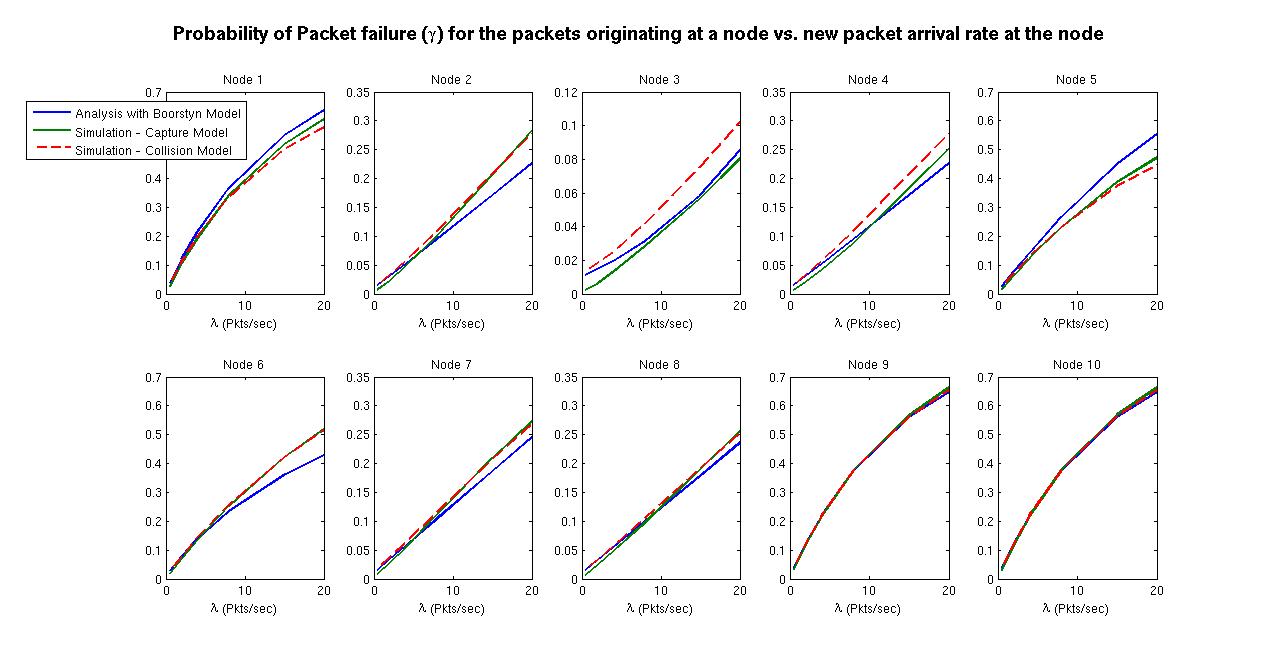}
	\caption{Plots of the packet failure probability, $\gamma$, for the nodes in Figure~\ref{fig:hidden_10nodes}; blue solid line indicates analytical results, green solid line indicates simulation with Capture Model, and Red dotted line indicates simulation with Collision + PER Model}
\label{fig:measure_gamma_for_tree-n10-CS3-PER0.01}
\end{center}
\end{figure}

\begin{figure}[ht]
\begin{center}
	\includegraphics[height=7cm, width=9.5cm]{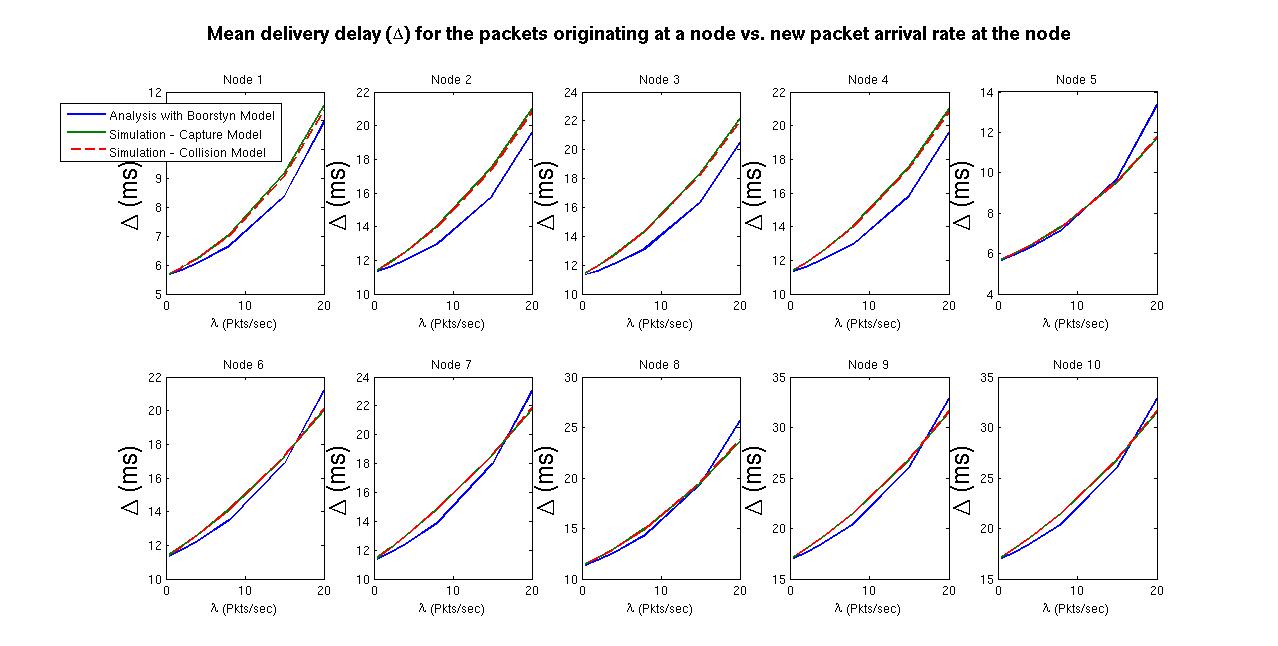}
	\caption{Plots of the mean delivery delay, $\Delta$, for the nodes in Figure~\ref{fig:hidden_10nodes}; blue solid line indicates analytical results, green solid line indicates simulation with Capture Model, and Red dotted line indicates simulation with Collision + PER Model}
\label{fig:measure_delta_for_tree-n10-CS3-PER0.01}
\end{center}
\end{figure}

\begin{figure}[ht]
\begin{center}
	\includegraphics[height=7cm, width=9.5cm]{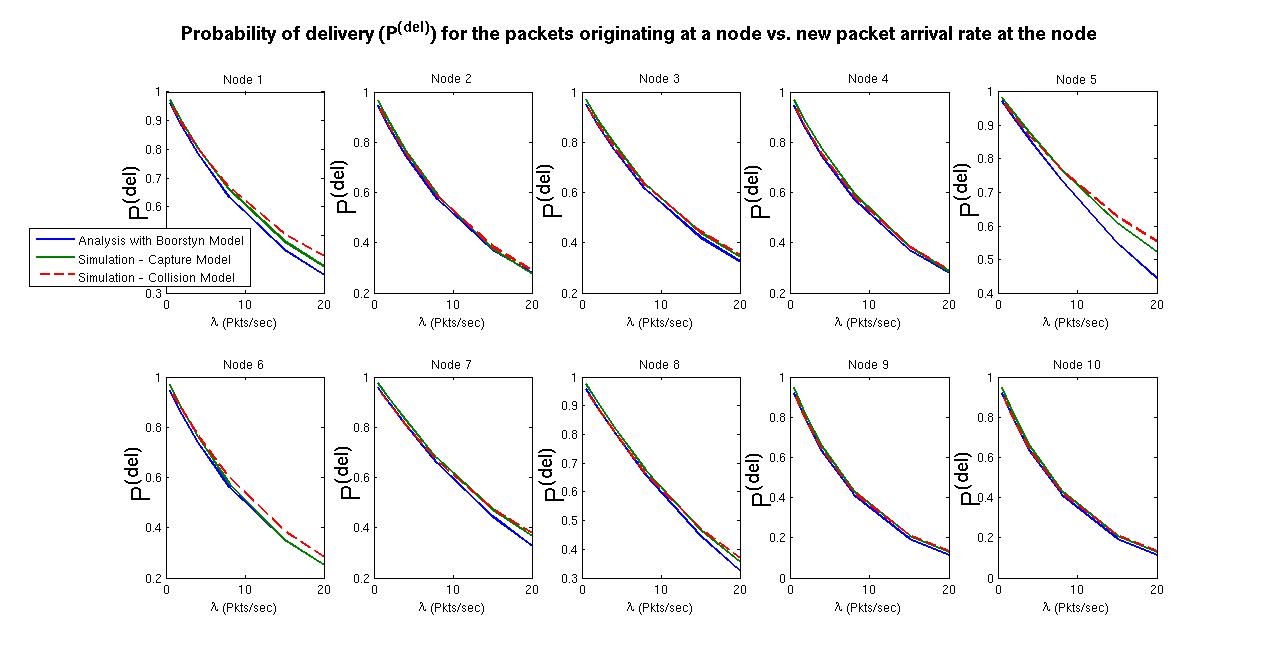}
	\caption{Plots of the average probability of delivery, $P^{(\mathsf{del})}$, for the nodes in Figure~\ref{fig:hidden_10nodes}; blue solid line indicates analytical results, green solid line indicates simulation with Capture Model, and Red dotted line indicates simulation with Collision + PER Model}
\label{fig:measure_pdel_for_tree-n10-CS3-PER0.01}
\end{center}
\end{figure}

\if 0
We observe that error in measure $\alpha_i$ decreases at higher arrival rates. Despite significant error in measure $\alpha_i$ at lower arrival rates ($\leq 0.5$ pkts/sec), other measures are within $\pm10\%$ error (though not shown here). This can be attributed to the following reasons: (1) values of measure $\beta_i$ and $b_i$ are more governed by mean time of first and second backoff, and not so much by $\alpha_i$, especially for small values of $\alpha_i$. Same is true for measure $q_i$; (2) since the other measures are obtained from measures $\beta_i$, $b_i$ and $q_i$, they are also not affected much by inaccuracies in measure $\alpha_i$; most importantly, (3) measure $\delta_i$ for hidden node networks is more affected by measure $\gamma_i$ but not by measure $\alpha_i$, since a packet is discarded at the MAC layer only after 5 consecutive CCA failures, the probability of which is very small at such low rates. Error in measure $\gamma_i$ is upto $25\%$ even at higher arrival rates. Error in measure $\Delta_i$ is within $\pm10\%$ for all the arrival rates plotted, and so is true for measure $p^{(\mathsf{del})}_i$. Also at higher arrival rates, the probability of delivery, $p^{(\mathsf{del})}_i$, for relay nodes has more error compared to leaf nodes since the error accumulates due to the error in the discard probability, $\delta_i$, at each node along a path.
\fi 

\gap
\noindent
\textbf{Observations:}
\begin{enumerate}
\item We observe from the plots that the errors in the measures increase at higher arrival rates, i.e., the accuracy of the analysis decreases at higher arrival rates. For example, whereas for Nodes 1 to 4, and Nodes 6 to 10, the errors in the packet failure probability, $\gamma_i$, were between 1.5\% and 20\%, for Node 5, the error was up to 22.2\% with arrival rate of 20 packets per second. We will see, however, that the practical operating range of these networks is closer to 1 measurement per second from each node, for which value the approximation is excellent.
\item The differences in the simulation measures under the Capture Model and the Collision + Link Error Model were well within 10\% for all nodes at lower arrival rates, and within 20\% even at an arrival rate of 20 packets per second. Hence, we conclude that the Collision + PER Model does not yield performance significantly different from the Capture Model. For other examples illustrating this fact, see \cite{sanjaymeth}. 
\end{enumerate}

\subsection{Extensive Simulation Results}
We now present a summary of an extensive simulation study, where we only use Collision + Link Error as the model. We have performed simulations with a variety of scenarios where one or more of the following features were varied: the topology, number of nodes in the network, average number of nodes in the CS range of a node, and the probability of data packet error on the link due to noise. For compact representation of the various cases, we use the following notation:
\begin{itemize}
     \item[(i)] tree-n$N$-CS$m$-PER$l$: These are the cases in which the network is a tree with $N$ nodes (excluding the base station), all of which are sources. On an average, $m$ nodes are in the CS range of a node, i.e., $m=\frac{\sum_{i=1}^N \text{number of nodes in the CS range of node $i$}}{N}$. Moreover, the PER takes the same value, $l$, on all the links. 
     \item[(ii) ] tree$\mathsf{R}$-n$N$-CS$m$-PER$l$: These are the cases in which the network is a tree with $N$ nodes (excluding the base station) such that 10 nodes among them are source nodes, and the rest of the nodes are relay nodes (the ``R'' in ``treeR'' signifies that there are relays as well as sources). On an average, $m$ nodes are in the carrier sense range of a node, and the PER takes the same value, $l$, on all the links. 
     \item[(iii) ] star-n$N$-CS$m$-PER$l$: These are the cases in which the network is a star with $N$ nodes (excluding the base station), all of which are sources, positioned symmetrically on a circle centred at the base station. The $m$ nodes nearest to a node are in the carrier sense range of it, and the PER on all the links has the same value $l$. 
          \item[(iv) ] line-n$N$-CS$m$-PER$l$: These are the cases in which the network is a line with $N$ nodes (excluding the base station) all of which are sources. $m$ nodes on either side of a node are in the carrier sense range of it. The PER on all the links is the same, and is equal to $l$. 
\end{itemize} 
For example, the case in Figure~\ref{fig:hidden_10nodes} is tree-n10-CS3-PER0.01.

We compute the fractional errors in performance measures as $\frac{\text{Simulation} - \text{Analysis}}{\text{Simulation}}$. The Boorstyn et al. \cite{boorstyn-etal87CSMA-throughput-analysis} model based analysis (see \ref{subsubsec:Ti-eff}) is employed for computing $T_i^{(\mathsf{eff})}$. 

Table~\ref{tab:error for various topologies} tabulates the accuracy of the analysis compared to simulation, and also the computation times of both simulation and analysis. We use the following notation to indicate the range of errors: (i) \checkmark indicates that the error is within $\pm$ $10\%$; (ii) $+$ indicates overestimate by analysis with error from $10\%$ to $25\%$;  (iii) $++$ indicates overestimate by analysis with error more than $25\%$; (iv) $-$ indicates underestimate by analysis with error from $10\%$ to $25\%$; and (v) $--$ indicates underestimate by analysis with error more than $25\%$. The entry in every row is $a, b$, where $a, b \in \{\checkmark, +, ++, -, -- \}$ with $a$ denoting the error summary for a range of $\lambda$ where, the discard probability, $\delta_i \leq 0.01$, and $b$ denoting the error summary for  $\lambda > 2$ pkts/sec. 

\begin{table*}[ht]
\begin{center}
\begin{tabular}{@{}|c|c|c|c|c|c|@{}}
\hline
Sl. No	&	Topology &	$\overline{P}^{(\mathsf{del})}$ & $\overline{\Delta}$ & Simulation Time & Analysis Time\\
        &            &                               &                    & (Min, Mean, Max) in seconds & in seconds \\
        &            &								 &					  & per arrival rate until 10pkts/sec & per arrival rate until 10pkts/sec \\	
\hline
1 & tree$\mathsf{R}$-n23-CS4-PER0.01 &\checkmark, $-$ & \checkmark, \checkmark & $3120, 9720, 17580$	& $16$ \\
\hline 
2 & tree$\mathsf{R}$-n20-CS3-PER0.01 &\checkmark, $-$ &	\checkmark, \checkmark & $2220, 7740, 15060$	&	$12$ \\
\hline
3 & tree$\mathsf{R}$-n20-CS3-PER0.01 & \checkmark, $-$ & \checkmark, \checkmark & $2580, 8640, 16800$	& $14$ \\
\hline
4 & tree$\mathsf{R}$-n19-CS3-PER0.01 & \checkmark, $-$ & \checkmark, \checkmark	& $2460, 7680, 16020$	&  $10$ \\
\hline
5 & tree$\mathsf{R}$-n19-CS2-PER0.01 & \checkmark, \checkmark &	\checkmark, \checkmark	& $1800, 7440, 14340$ & $10$ \\
\hline
6 & tree$\mathsf{R}$-n20-CS3-PER0.01 & \checkmark, $-$ & \checkmark, \checkmark	& $2880, 9480, 17040$	& $16$ \\
\hline
7 & tree$\mathsf{R}$-n20-CS2-PER0.01 &	\checkmark, $--$ &	\checkmark, \checkmark & $2280, 8340, 15180$	& $12$ \\
\hline
8 & tree$\mathsf{R}$-n22-CS4-PER0.01 & \checkmark, $--$	& \checkmark, \checkmark & $3420, 11820, 19680$	& $36$ \\
\hline
9 & tree$\mathsf{R}$-n19-CS2-PER0.01 & \checkmark, $-$	& \checkmark, \checkmark &	$2040, 7440, 14340$	& $14$ \\
\hline
10 & tree$\mathsf{R}$-n19-CS3-PER0.01 &	\checkmark, $-$	& \checkmark, \checkmark &	$2460, 8760, 17040$	& $17$ \\
\hline
11 & tree$\mathsf{R}$-n20-CS3-PER0.01 &	\checkmark, $-$	& \checkmark, \checkmark & $2640, 9180, 17460$	& $17$ \\
\hline
12 & tree$\mathsf{R}$-n19-CS3-PER0.01 & \checkmark, $-$	& \checkmark, \checkmark &	$2340, 7920, 16740$	& $15$ \\
\hline
13 & tree$\mathsf{R}$-n19-CS3-PER0.01 &	\checkmark, $-$	& \checkmark, \checkmark &	$2400, 8280, 16980$	& $12$ \\
\hline
14 & tree$\mathsf{R}$-n20-CS3-PER0.01 &	\checkmark, $-$	& \checkmark, \checkmark & $2340, 8340, 16440$	& $11$ \\
\hline
15 & tree$\mathsf{R}$-n19-CS3-PER0.01 &	\checkmark, $-$	& \checkmark, \checkmark &	$2280, 7860, 16260$	& $11$ \\
\hline
16 & tree$\mathsf{R}$-n19-CS2-PER0.01 & \checkmark, $--$ &	\checkmark, \checkmark & $1740, 7260, 14220$ & $9$ \\
\hline
17 & tree$\mathsf{R}$-n20-CS3-PER0.01 &	\checkmark, $-$	& \checkmark, \checkmark &	$2520, 8520, 16020$	& $11$ \\
\hline
18 & tree$\mathsf{R}$-n19-CS2-PER0.01	& \checkmark, $-$ &	\checkmark, \checkmark	& $2040, 7560, 14760$	& $10$ \\
\hline
19 & tree$\mathsf{R}$-n19-CS2-PER0.01 &	\checkmark, $-$	& \checkmark, \checkmark &	$1920, 7140, 14460$ & $9$ \\
\hline
20 & line-n10-CS2-PER0.01 & \checkmark, \checkmark & \checkmark, \checkmark & 780, 6240, 12060& 5 \\
\hline
21 & line-n10-CS3-PER0.01 & \checkmark, \checkmark & \checkmark, \checkmark & 1080, 7860, 13980& 8 \\
\hline
22 & line-n10-CS4-PER0.01 & \checkmark, \checkmark & \checkmark, $+$ & 1260, 8700, 16080 & 12 \\
\hline
23 & star-n20-CS9-PER0.01 & \checkmark, \checkmark & \checkmark, \checkmark &900, 6480 ,12660& 36 \\
\hline
24 & star-n20-CS11-PER0.01 & \checkmark, \checkmark & \checkmark, \checkmark &1140, 7620, 14880 & 48 \\
\hline
25 & tree-n13-CS2-PER0.01 & \checkmark, \checkmark & \checkmark, \checkmark & 960, 7320, 13260 & 5 \\
\hline
\end{tabular}
\caption{Error in measures $P^{(\mathsf{del})}_i$ and $\Delta_i$ where $\overline{P}^{(\mathsf{del})}$ and $\overline{\Delta}$ are the errors averaged over all nodes for every input arrival rate. Note that run timing of simulation and analysis are in \emph{seconds}.} 
\label{tab:error for various topologies}
\end{center}
\end{table*}

\gap
\noindent
\textbf{Observations and Discussion:}
\begin{enumerate}
\item As can be observed from Table~\ref{tab:error for various topologies}, for small arrival rates at which the discard probability, $\delta_i \leq 0.01$, the errors in both $\overline{P}^{(\mathsf{del})}$, and $\overline{\Delta}$ are less than 10\% in all the scenarios tested, whereas at higher arrival rates, the error in $\overline{P}^{(\mathsf{del})}$ sometimes exceeds 25\% (scenarios 7,8, and 16), which is in agreement with our earlier observation from Figure~\ref{fig:measure_pdel_for_tree-n10-CS3-PER0.01} as well. However, the error in $\overline{\Delta}$ even at higher arrival rates was within 10\% for all but one scenario (scenario 22), where the analysis overestimated the delay within an error of 25\%. 

One possible explanation for the degradation in the accuracy of the analysis at higher arrival rates can come from the discussion in Section~\ref{sec:dtmc-stability}. Note that at higher arrival rates, the queue non-empty probability of the nodes are typically higher, and hence the condition $\sum_{i=1}^N q_i < 1$ (see Theorem~\ref{thm:stability}) is more likely to be violated, in which case, the system may not be stable, and the validity of the fixed point analysis is questionable. For example, note from Figure~\ref{fig:measure_sum_qi} that for the example topology in Figure~\ref{fig:hidden_10nodes}, $\sum_{i=1}^N q_i$ approaches 1 at around $\lambda = 10$ pkts/sec, and it can be seen from Figures~\ref{fig:measure_alpha_for_tree-n10-CS3-PER0.01}, \ref{fig:measure_delta_for_tree-n10-CS3-PER0.01}, \ref{fig:measure_gamma_for_tree-n10-CS3-PER0.01}, and \ref{fig:measure_pdel_for_tree-n10-CS3-PER0.01} that the errors in the measures are more pronounced for $\lambda \geq 10$ pkts/sec.

\item We also observe from Table~\ref{tab:error for various topologies} that the average time to analytically compute the performance measures for each arrival rate (using the Boorstyn~et.al \cite{boorstyn-etal87CSMA-throughput-analysis} model for computing $T^{(\mathsf{eff})}_i$) was of the order of seconds, whereas the average simulation time ran into several hours. 

Further, it was observed that for hidden-nodes networks with 20 nodes, and an average of 3 nodes in the CS range of a node, the analysis with Boorstyn~et.al \cite{boorstyn-etal87CSMA-throughput-analysis} model for computing $T^{(\mathsf{eff})}_i$ takes around $13$ secs per arrival rate, and if $T^{(\mathsf{eff})}_i$ is computed using $M/D/\infty$, then the analysis takes around $9$ secs per arrival rate. Hence we can trade-off between the accuracy of analysis and computation time since analysis with $M/D/\infty$ model usually incurs more error ($> 25\%$) at higher arrival rates, but less than $10\%$ for $\lambda < 0.5$ pkts/sec. 

\item Finally, it was observed that the packet discard probability at a node increases to an impractical value much before the average delay at the node becomes substantial, e.g., tree-n10-CS3-PER0.01-Node-1 discards $2$-$5$ packets for every $1000$ packets when the average node delay is $5$-$6$ msec. Hence, the overall packet delivery probabilities for the sources act as performance bottlenecks for these networks.  
\end{enumerate}

%----------------Network-design-----------------------------------%

\section{Network Design} \label{sec:network-design}
In this section, we shall consider a problem of QoS constrained network design for a given positive traffic arrival rate to demonstrate the usefulness of the analytical model in an iterative network design process. 

Consider the following setting: \emph{a set of sensor nodes and a base station (BS) are deployed over an area; each sensor node is a data source, and can also act as a relay. The transmit power level of each node can be adjusted over a range.} Consider the complete graph over these nodes. Each directed link in this graph will have a certain packet error rate (PER) that \emph{depends on the transmit power level of the sender node on that link.} 

We consider the following network design problem: for a given traffic arrival rate $\lambda > 0$ at the sensor nodes, \emph{minimize the maximum transmit power level used by the sensor nodes, such that the resulting network has the following properties.}

\begin{enumerate}
\item Each sensor node has a path to the BS, with the PER on each link in each path being upper bounded by a predefined target threshold $p$.
\item For the given $\lambda$, the end-to-end packet delivery probability (i.e., the probability that a packet is not discarded) on any path is at least $p_{\mathrm{del}}$.
\item For the given $\lambda$, the mean delay (computed over the successfully delivered packets) on any path is upper bounded by a predefined target $d_{\max}$.
\end{enumerate}

Note that for a design (network) to satisfy the QoS objectives for a given arrival rate, it is \emph{necessary} that the network satisfies the QoS objectives under zero/light traffic load, i.e., in the limit as $\lambda \to 0$, in which limit a packet that enters the network departs from the network before another packet arrives, i.e., each packet traverses the network $\emph{alone}$, prompting us to call this limit the $\emph{lone packet}$ traffic model (for a more formal proof\footnote{A formal proof is necessary for this seemingly obvious statement since, in CSMA/CA networks, in general, the performance is not monotone with the arrival rates (see, e.g., \cite{borst}); hence, the statement about the lone-packet traffic model needs to be made with care.} of this fact, see \cite{fullpaper}). We, therefore, adopt the following two step approach:

\begin{enumerate}
\item We first focus on the QoS constrained network design problem under the lone-packet traffic model in Section~\ref{subsec:lone-packet-design}; we formulate the problem as a network design problem on graphs, and propose an algorithm to solve the problem optimally.
\item Then, in Section~\ref{subsec:cont-traffic-design}, we combine the lone-packet based design algorithm with the analytical tool developed in Section~\ref{sec:analytical-model} to address the more general (and more complex) problem of QoS constrained network design for a given positive traffic arrival rate.
\end{enumerate}
  
Note that we can ensure a given target PER on a link by ensuring that the \emph{average} received power across the link (averaged over shadowing and fading) meets a received power target. Such a target average received power would
be obtained by deriving a margin above the minimum value of received power, from statistics of shadowing and fading. If we take this worst case approach over the joint pdf of shadowing and fading, then for a given target PER, the required transmit power on a link is a non-decreasing function of the link length. Hence, the problem of minimizing the maximum transmit power is \emph{equivalent to minimizing the maximum link length in the resulting network}.  

\subsection{Network Design under the Lone-Packet Model} 
\label{subsec:lone-packet-design}

Given the target PER $p$ on any link, the mean delay on a link under the ``lone packet'' model can be computed using an elementary analysis (see \cite{abhijitmeth}), taking into account the backoff behavior of 802.15.4 CSMA/CA, and using the backoff parameters given in the standard \cite{IEEE802-15-4-06std}. Then, to meet the mean delay requirement of $d_{\max}$ on a path with $h$ hops, we require that

\begin{equation}
h\leq \left\lfloor\frac{d_{\max}}{\overline{d}_{single-hop}}\right\rfloor\triangleq h_{\max}^{delay}
\end{equation}
where, $\overline{d}_{single-hop}$ is the mean link delay computed as explained earlier.

Again, given the target link PER of $p$, and the number of retries $r$ (obtained from the standard) before a packet is discarded on a link, the packet discard probability on a link can be obtained as $q\:=\:p^{r+1}$. Hence, to ensure a packet delivery probability of at least $p_{del}$ on a path with $h$ hops, we require that

\begin{eqnarray}
h &\leq \frac{\ln{p_{del}}}{\ln{(1-q)}}&\triangleq h_{\max}^{delivery}
\end{eqnarray}

Hence, to ensure the QoS constraints under the ``lone packet'' model, we require that the \emph{hop count on each path is upper bounded by $h_{\max}\:=\:\min\{h_{\max}^{delay},h_{\max}^{delivery}\}$}.

Thus, the problem of QoS constrained network design under the ``lone packet'' model can be reformulated as the following graph design problem: 

\emph{Given a set of source nodes $Q$, and a base station (BS), indexed as node 0, consider the graph $G=(V,E)$, where $V=Q\cup \{0\}$, and the edge set $E$ consists of all feasible edges, i.e., edges with PER $\leq\: p$ at maximum possible transmit power. Given a hop count constraint $h_{\max}$, the problem is to extract from $G$, a spanning tree on $Q$ rooted at the BS, such that the hop count on any path is upper bounded by $h_{\max}$, and the maximum edge length in the spanning tree is minimized}.

We call this, the MinMax Spanning Tree with Hop Constraint (MMST-HC) problem. 

The following is an algorithm to obtain an optimal solution to the MMST-HC problem. At each iteration, we prune all edges with edge length more than or same as the maximum edge length in the current feasible solution, form a shortest path tree (SPT) with hop count as cost using only the remaining edges, and keep doing this until the resulting SPT violates the hop constraint, at which point we stop and declare the last feasible solution as the final solution.

\subsubsection{\textbf{SPTiEP: Shortest Path Tree based iterative Edge Pruning Algorithm}}

\begin{itemize}
\item[(i)] \textbf{Initialize: }Set $k\leftarrow 0$, $G^{(0)}\leftarrow G$
\item[(ii)] \textbf{Checking feasibility: }In iteration $k$, find a shortest path tree (SPT) $T^{(k)}$ on the graph $G^{(k)}$. Check if all the paths from the sources to the BS in $T^{(k)}$ satisfy the hop constraint.

\begin{itemize}
\item If the hop constraint is not met for some of the sources in $T^{(k)}$, \textbf{STOP}. 
\begin{itemize}
\item If $k=0$, declare the problem \emph{infeasible}. 
\item If $k>0$, output $T^{(k-1)}$ as the final solution.
\end{itemize}

\item If the hop constraint is met for all the sources in $T^{(k)}$, proceed to the next step.
\end{itemize}
\item[(iii)] \textbf{Edge pruning: }Let $\overline{w}^{(k)}$ be the maximum edge length in $T^{(k)}$. Remove from $G^{(k)}$, all edges of length $\geq \overline{w}^{(k)}$ to obtain the graph $G^{(k+1)}$.
\item[(iv)] \textbf{Iterate: }Set $k\leftarrow k+1$. Go to Step 2.
\end{itemize}

Since the time complexity of finding an SPT in a graph with $N$ nodes is $O(N\log N)$, it can be easily verified that the time complexity of the SPTiEP algorithm is $O(N^3\log N)$ for a graph with $N$ nodes. Thus, \emph{the SPTiEP algorithm is polynomial time}.

\subsubsection{\textbf{Proof of correctness of SPTiEP}}

We define
\begin{description}
\item $\mathcal{F}^{(k)}\:=\:\{T\subset G^{(k)}: \text{$T$ satisfies the hop constraint}\}$: set of all feasible solutions contained in the graph $G^{(k)}$
\item $w_{opt}$: The minmax edge length, i.e., the maximum edge length in an optimal solution
\end{description}

Also recall the definitions of $T^{(k)}$ and $\overline{w}^{(k)}$ from the description of the algorithm.

Clearly, $\mathcal{F}^{(k)}$ is non-empty \emph{if and only if} $T^{(k)}\in \mathcal{F}^{(k)}$.

To prove the correctness of SPTiEP algorithm, it is enough to show the following:

\begin{proposition}
\label{prop:sptiep-correctness}
Given that $\mathcal{F}^{(k)}\:\neq \emptyset$,  
\begin{equation*}
\mathcal{F}^{(k+1)}\:=\emptyset\: \Rightarrow w_{opt}=\overline{w}^{(k)}
\end{equation*}
\end{proposition}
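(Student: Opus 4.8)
The plan is to prove the equality $w_{opt} = \overline{w}^{(k)}$ by establishing the two inequalities $w_{opt} \le \overline{w}^{(k)}$ and $w_{opt} \ge \overline{w}^{(k)}$ separately. The first is immediate: since we are given $\mathcal{F}^{(k)} \neq \emptyset$, the remark preceding the proposition guarantees $T^{(k)} \in \mathcal{F}^{(k)}$, i.e.\ the shortest path tree $T^{(k)}$ is itself a feasible solution whose maximum edge length is, by definition, $\overline{w}^{(k)}$. As $w_{opt}$ is the \emph{minimum} possible maximum edge length over all feasible trees, exhibiting the feasible tree $T^{(k)}$ forces $w_{opt} \le \overline{w}^{(k)}$.

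The harder direction is $w_{opt} \ge \overline{w}^{(k)}$, and for this I would first pin down exactly which edges survive into $G^{(k+1)}$. The key structural claim is that the pruning thresholds are strictly decreasing, i.e.\ $\overline{w}^{(0)} > \overline{w}^{(1)} > \cdots > \overline{w}^{(k)}$. This follows because $G^{(j+1)}$ is obtained from $G^{(j)}$ by deleting every edge of length $\ge \overline{w}^{(j)}$, so every edge of $G^{(j+1)}$ --- and in particular every edge of the tree $T^{(j+1)} \subseteq G^{(j+1)}$ --- has length strictly below $\overline{w}^{(j)}$; taking the maximum over $T^{(j+1)}$ gives $\overline{w}^{(j+1)} < \overline{w}^{(j)}$. (Here one also notes that $\mathcal{F}^{(k)} \neq \emptyset$ together with the nesting $G^{(k)} \subseteq \cdots \subseteq G^{(0)}$ implies $\mathcal{F}^{(j)} \neq \emptyset$ for all $j \le k$, so each $T^{(j)}$ is a genuine feasible tree and each $\overline{w}^{(j)}$ is well defined.) A short induction on $j$ using this monotonicity then shows that $G^{(k+1)}$ consists of precisely those edges of the original graph $G$ whose length is strictly less than $\overline{w}^{(k)}$: an edge deleted at an earlier iteration $j < k$ has length $\ge \overline{w}^{(j)} > \overline{w}^{(k)}$, so no short edge is ever discarded prematurely.

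With this characterization of $G^{(k+1)}$ in hand, I would argue by contradiction. Suppose $w_{opt} < \overline{w}^{(k)}$, and let $T^{\ast}$ be an optimal feasible tree, so that every edge of $T^{\ast}$ has length $\le w_{opt} < \overline{w}^{(k)}$. By the characterization just established, every edge of $T^{\ast}$ lies in $G^{(k+1)}$, hence $T^{\ast} \subseteq G^{(k+1)}$; and $T^{\ast}$ satisfies the hop constraint because it is feasible. Thus $T^{\ast} \in \mathcal{F}^{(k+1)}$, contradicting the hypothesis $\mathcal{F}^{(k+1)} = \emptyset$. Therefore $w_{opt} \ge \overline{w}^{(k)}$, and combining the two inequalities yields $w_{opt} = \overline{w}^{(k)}$.

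The main obstacle I anticipate is the structural claim about $G^{(k+1)}$: the entire argument rests on the fact that pruning never removes an edge shorter than the eventual threshold $\overline{w}^{(k)}$, which in turn depends on the strict monotonicity of the thresholds $\overline{w}^{(j)}$. Everything else --- the feasibility of $T^{(k)}$ and the contradiction step --- is short once this monotonicity and the resulting edge-set description are nailed down carefully, including the easily overlooked point that all intermediate feasible sets $\mathcal{F}^{(j)}$, $j \le k$, are nonempty so that the thresholds are defined in the first place.
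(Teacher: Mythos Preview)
Your proof is correct and follows essentially the same two-inequality strategy as the paper: feasibility of $T^{(k)}$ gives $w_{opt}\le\overline{w}^{(k)}$, and the fact that $G^{(k+1)}$ contains every edge of length $<\overline{w}^{(k)}$ forces $w_{opt}\ge\overline{w}^{(k)}$ via contradiction. The only difference is that the paper asserts the structural claim about $G^{(k+1)}$ in one line ``by construction,'' whereas you supply the supporting monotonicity argument $\overline{w}^{(0)}>\cdots>\overline{w}^{(k)}$ explicitly; this is a welcome addition of rigor, not a change of approach.
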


\begin{proof}
See the Appendix.
\end{proof}

\subsection{Network Design for a Given Positive Arrival Rate} 
\label{subsec:cont-traffic-design}

Now we come back to the more general problem of QoS constrained network design for a given positive arrival rate. This problem is rendered much more difficult compared to its lone-packet version due to the complex stochastic interaction between contending nodes, which, unlike the lone-packet model, makes it hard to map the QoS constraints to simple explicit constraints on certain graph properties. Therefore, unlike the lone-packet version where we posed the problem as a pure graph design problem, in the positive traffic case, we shall use the analytical model explicitly in an iterative design process to evaluate for QoS, the designs obtained systematically in every iteration. While traditional network simulation tools can also be used, in principle, to evaluate a given network for QoS, \emph{the time required for such network simulation is significantly more than that required by the analysis} (as we saw in Table~\ref{tab:error for various topologies}, and shall see again in our numerical experiments in Section~\ref{subsec:design-exp}), and that makes network simulation, an impractical option in an iterative design process.   

Also, a naive approach to the design problem would be to consider all possible trees from the given graph $G$, and evaluate each of them for QoS (using either the analysis, or network simulation), and choose one whose maximum edge length is minimum among all those that meet the QoS objectives. While this exhaustive search appears to be the only way that is guaranteed to obtain a feasible solution whenever there exists one, this approach has exponential time complexity (since the number of possible trees is exponential in the number of nodes), and is therefore, not practical. 

We shall present below, a polynomial time algorithm for the proposed positive traffic design problem, using the SPTiEP algorithm along with the analytical model; note that because of the stochastic nature of the interaction between contending nodes in different possible networks, in absence of an exhaustive evaluation of all possibilities, the algorithm is not theoretically guaranteed to return a feasible solution whenever there exists one; however, in our numerical experiments, the algorithm was always found to return a feasible solution, and moreover, simulations confirmed that designs proposed with the analytical tool did meet the QoS requirements (see Section~\ref{subsec:design-exp}).   

\subsubsection{\textbf{Extended SPTiEP: An algorithm for QoS constrained network design at given positive arrival rate $\lambda > 0$}} 
As we had mentioned before, to meet the QoS objectives at a positive arrival rate, it is \emph{necessary} (but not sufficient) to meet the objectives under the lone-packet model. Hence, our design still needs to satisfy the hop count constraint $h_{\max}$ derived from the QoS constraints under the lone-packet model. With this in mind, we proceed as follows: if the outcome of the SPTiEP algorithm meets the QoS constraints at the given $\lambda$, then that is an optimal solution (since it is an optimal solution for the lone-packet design problem, and it is also QoS feasible for the given $\lambda>0$). If, however, the SPTiEP outcome does not meet the QoS constraint, we need to change the design by \emph{adding some of the edges of greater length that were pruned in course of the SPTiEP algorithm}; note that adding edges is the only option as pruning any more edge from the SPTiEP solution will cause us to violate the hop constraint, and hence the lone-packet QoS. The detailed steps are presented below.

\begin{enumerate}
\item \textbf{Lone-packet design: }Run the SPTiEP algorithm on the graph $G$ to obtain a tree $T_0$.
\begin{itemize}
\item If $T_0$ does not satisfy the hop constraint, declare the problem infeasible, as we cannot satisfy the QoS objectives even for $\lambda = 0$.
\item Else, go to the next step.
\end{itemize}
\item Set $k\leftarrow 0$. Mark all edge lengths in $G$ as \emph{not examined}. 
\item \textbf{Checking feasibility for $\lambda>0$: }Evaluate $T_k$ for QoS requirements (i.e., $p_{del}$ and $d_{\max}$) at the given arrival rate $\lambda$, using the analytical model.\footnote{for the purposes of this design, we completely trust the outcome of the analytical model, i.e., we assume the analytical model to be 100\% accurate.}
\begin{itemize}
\item \textbf{Stopping criteria 1: }If QoS is met, output $T_k$ as the final solution.
\item \textbf{Stopping criteria 2: }If QoS is not met, and all edge lengths in $G$ have been \emph{examined}, declare the problem \emph{possibly} infeasible.
\item Else, go to next step.
\end{itemize}
\item Let $\overline{w}_k$ be the maximum edge length in $T_k$. Identify the least edge length $> \overline{w}_k$ in the graph $G$ that is not yet \emph{examined}; let us denote this as $w_{least,k}$. 

\item \textbf{Edge augmentation: }Augment $T_k$ with all edges (in $G$) of length $\leq w_{least,k}$, to obtain the graph $G_{k+1}$. Mark all the edge lengths in $G_{k+1}$ as \emph{examined}.

\gap
\noindent
\comment Since our objective is to minimize the maximum edge length, in Steps 4 and 5, we add back the edges pruned during SPTiEP \emph{in increasing order of their lengths}.

\item \textbf{Redesign: }Find an SPT $T_{k+1}$ in $G_{k+1}$. Observe that if $T_k$ satisfies the hop constraint, $T_{k+1}$ also satisfies the hop constraint.
\item \textbf{Iterate: }Set $k\leftarrow k+1$. Go to Step 3.

\gap
\noindent
\comment In Steps 6 and 7, we check if the resulting shortest path tree in the augmented graph satisfies the QoS constraints for the given positive arrival rate. 

Note that in Step 6, there could be several possible SPTs on the graph $G_{k+1}$; while the total network wide traffic load, as well as the total number of nodes is the same in all of these SPTs, the individual loads on the various nodes may vary from one SPT to another, resulting in potentially different delay and delivery performance. Since it is not possible to evaluate all the possible SPTs in polynomial time, we find and evaluate only one of them, and move on to a higher power/edge length design in case of QoS not being satisfied; this may lead to suboptimal design. Moreover, for the same reason, the algorithm may not always return a feasible solution even when there exists one.  
\end{enumerate}

\gap
\noindent
\textbf{Remarks:}
\begin{enumerate}
\item It is interesting to note that the above heuristic is basically the reverse procedure of the SPTiEP algorithm, with the additional step of evaluating each design using the analytical model. At the end of each iteration, the maximum edge length in the solution changes at most to the immediate higher value not used in previous iterations. We stop as soon as a QoS feasible solution is obtained in some iteration. 
\item An alternate approach could be to start with a shortest path tree on the entire graph, and then prune edges, checking for feasibility with positive load in each iteration. But since meeting the lone-packet QoS is necessary to meet the positive-load QoS, we have adopted the above approach of first designing a network satisfying the lone-packet QoS, and then backtracking.      
\end{enumerate}
 
\subsection{Numerical Experiments}
\label{subsec:design-exp}
\begin{table*}[Ht]
  \centering
\caption{Execution time comparison of the analysis based design algorithm and qualnet simulation}
\label{tbl:exec_time_compare}
\footnotesize
  \begin{tabular}{|c|c|c|c|c|c|c|c|c|c|c|c|c|}\hline
    Scenarios & \multicolumn{6}{c|}{Algorithm execution time} & \multicolumn{6}{c|}{Qualnet simulation time}\\
     & \multicolumn{6}{c|}{in sec} & \multicolumn{6}{c|}{in sec}\\
     & \multicolumn{3}{c|}{SPTiEP} & \multicolumn{3}{c|}{E-SPTiEP} & \multicolumn{3}{c|}{SPTiEP outcome} & \multicolumn{3}{c|}{E-SPTiEP outcome}\\
 &  Average & Max & Min & Average & Max & Min & Average & Max & Min & Average & Max & Min\\
 \hline
   30 & 0.0468 & 0.2532 & 0.0300 & 0.3827 & 0.5084 & 0.1609 & 2412.3 & 3000.6 & 2041.2 & 1765.7 & 2233.8 & 1326.6\\
  \hline
\end{tabular}
\normalsize
\end{table*}

To demonstrate the strengths and limitations of the analysis in an iterative design process, we performed several test runs of the proposed algorithm on randomly generated network scenarios. 

30 random networks were generated in a $50\times 50\:m^2$ area as follows: The entire area was partitioned into square cells of side 10 meters. Consider the lattice created by the corner points of the cells. 10 source nodes were placed at random over these lattice points. The minimum power level of the nodes is assumed to be sufficient so that all nodes are within CS range of one another, i.e., we have a ``no hidden node'' scenario. We chose the target link PER to be 0.01, $p_{del}=95\%$, and $d_{\max}=25\:msec$. These result in a hop constraint of $h_{\max}=5$ for the ``lone packet'' model. Also, the traffic arrival rate at each source was chosen to be $\lambda = 1$ packet/s, which is actually quite adequate for many wireless sensing applications.

The design algorithm using the analytical model was run on the 30 random scenarios. \emph{In each case, the algorithm was found to return a feasible solution.} 

Also, to validate the solution provided by the algorithm, we performed Qualnet simulation on both the outcome of the SPTiEP algorithm as well as the final solution for each of the 30 cases. The observations are as follows:

\textbf{Observations:}
\begin{enumerate}
\item In all 30 cases, simulations of the final solution showed QoS performance better than the target. That is, \emph{the solution provided by the algorithm using the analytical model did actually satisfy the QoS constraints as verified by the simulation in all cases}. 
\item In 12 out of the 30 cases, simulations suggested that the SPTiEP solution (``lone packet'' design) met the QoS constraints even for the given $\lambda = 1$ packet/s (and hence was optimal), but the analytical model suggested otherwise. Thus, \emph{the analytical model was somewhat conservative in its prediction of QoS}, which eventually led to a design that used more power than the optimal. 
\item However, while the analytical model was conservative in its prediction of QoS, the final design met the QoS objectives and predicted the performance to within 10\%., i.e., the model was \emph{quite accurate} in its prediction. 
\item Finally, we compared the time taken to run the design algorithm using the analytical model against the time taken to run the Qualnet simulations on the designed networks in the 30 cases. To save time, \emph{in each case, we performed Qualnet simulations only for the lone packet design, and the final design obtained using E-SPTiEP algorithm}. While the algorithm was run in MATLAB 7.11 on a Windows Vista based Dell Inspiron 1525 laptop with 3 GB RAM, and 2 GHz Intel Core 2 Duo CPU, the simulations were run in Qualnet 4.5 on a Linux based Dell server with 32 GB main memory, and 3 Ghz clock speed. The results are summarized in Table~\ref{tbl:exec_time_compare}.

From Table~\ref{tbl:exec_time_compare}, we see that while \emph{the execution time of the analysis based algorithm is of the order of milliseconds}, simulation of each of the designed networks takes several minutes, which indicates that \emph{if the QoS evaluation step in the iterative design is performed using simulation instead of the analysis, the execution time of the algorithm can go well beyond an hour}. 
\end{enumerate}

From the above discussion, we can summarize the following strengths and limitation of the analytical model for use in a network design process.

\begin{enumerate}
\item \textbf{Strengths:}
\begin{itemize}
\item Design based on analysis is much \emph{faster} compared to that based on network simulation, and the predictions are \emph{accurate} to well within 10\% compared to the simulation results. Such speed and accuracy also make the analytical tool a good choice for \emph{online/field-interactive} design process, or even for as-you-go deployment of an impromptu wireless network.
\item Designs provided by the analysis based algorithm do actually work, i.e., satisfy the QoS constraints in practice, as validated by simulations of the designed networks. 
\end{itemize}
\item \textbf{Limitation: }The analysis, while quite accurate compared to the simulation, is somewhat \emph{conservative in its prediction of QoS performance}. This may sometimes lead to \emph{suboptimal} design, or even \emph{declaration of infeasibility}, when actually there may exist a feasible solution. 
\end{enumerate}

%------------------------Conclusion--------------------------%

\section{Conclusion} \label{sec:conclusion}

We have developed an approximate stochastic model for the performance analysis of beacon-less \mbox{IEEE 802.15.4} multihop wireless networks, with arrivals. The model permits the estimation of several time-average performance measures. Our model is accurate at small arrival rates (at which packet discard probability is small) in terms of the packet discard probability, failure probability and throughput, and delay. We calculated the mean end-to-end delays and packet delivery probabilities for each source in the network for specific packet generation rates at the source nodes. The results suggest that, for the relatively small size tree networks that we have studied, to operate in the low-discard low-delay region, the packet arrival rates at nodes should not be greater than a packet every few seconds (e.g. a packet inter-generation time of 5 to 10 seconds at the sources). Finally, we have formulated and solved a problem of QoS constrained network design for given positive traffic arrival rate to demonstrate the usefulness of the analytical model in an iterative network design process.

\section*{Appendix}
\subsection{Proof of Proposition~\ref{prop:irreducible}}
\begin{proof}
First observe that transition from any state to the all-zero state is possible if we consider a series of "only down" transitions, i.e., no further external arrival occurs into any of the queues, and the packets leave the queues either due to successful transmission, or drop due to excessive retries or
CCA failures. \emph{No external arrival} would mean that the number of packets
in the system can only decrease in each transition, and eventually we will
reach the all-zero state. Note that this is possible since the maximum number
of retransmission attempts, and maximum allowed CCA failures are \emph{finite}.

Now, suppose the class is open. Then, there exists a state $i\in \mathcal{C}_{0}$, and a state $j\notin \mathcal{C}_{0}$ such that $i\rightarrow j$, but $j\nrightarrow i$. But from our earlier argument, we know, state $j$ can reach the all-zero state. Since both state $i$ and the all-zero state belong to the class $\mathcal{C}_{0}$, the all-zero state can reach state $i$. It follows that $j\rightarrow i$, which is a contradiction. Hence, the class  $\mathcal{C}_{0}$ is closed. 

Clearly, $\mathcal{C}_{0}$ is aperiodic since starting in the all-zero state, the system can remain in the all-zero state if no external arrivals occur in a slot (since the arrival process is Poisson, this event has a positive probability). 
\end{proof}

\subsection{Proof of Theorem~\ref{thm:stability}}
\begin{proof}
For any $n>0$,
\begin{align}
\prob[\cap_{i=1}^N\{X_i(n)=0\}|\X(0)=\0,\Y(0)=\0]\nonumber\\
= 1\:-\:\prob[\cup_{i=1}^N\{X_i(n) > 0\}|\X(0)=\0,\Y(0)=\0]\nonumber\\
\geq 1\:-\:\sum_{i=1}^N \prob[X_i(n) > 0|\X(0)=\0,\Y(0)=\0]\label{eqn:union-bnd}
\end{align}
where, in writing \eqref{eqn:union-bnd}, we have used the union bound.

Taking limit as $n\to \infty$ on both sides,

\begin{eqnarray}
\lim_{n\to \infty}\prob[\cap_{i=1}^N\{X_i(n)=0\}|\X(0)=\0,\Y(0)=\0]\nonumber\\
\geq 1\:-\:\sum_{i=1}^N q_i\nonumber\\
 > 0,\:\text{when $\sum_{i=1}^N q_i\:<\:1$}\nonumber
\end{eqnarray}
Thus, if $\sum_{i=1}^N q_i<1$, we have that $\lim_{n\to \infty}\prob[\cap_{i=1}^N\{X_i(n)=0\}|\X(0)=\0,\Y(0)=\0]>0$, which, in conjunction with Proposition~\ref{prop:irreducible}, implies that $(\X(n),\Y(n))$ is positive recurrent.
\end{proof} 

\subsection{Proof of Proposition~\ref{prop:sptiep-correctness}}
\begin{proof}
Since $\mathcal{F}^{(k)}\:\neq \emptyset$, 
\begin{equation}
\label{eqn:leq}
w_{opt}\leq\overline{w}^{(k)}
\end{equation}
This is because $\overline{w}^{(k)}$ is the maximum edge length of $T^{(k)}$, a feasible solution, while $w_{opt}$ is the maximum edge length of an optimal solution. 

Again, $\mathcal{F}^{(k+1)}\:=\emptyset\:\Rightarrow\:\nexists\: T\subset G^{(k+1)}$ such that $T$ satisfies hop constraint. 

But by construction (Step 3 of SPTiEP algorithm), $G^{(k+1)}$ contains in it, all trees with maximum edge length $<\:\overline{w}^{(k)}$. Therefore, no tree with maximum edge length $<\:\overline{w}^{(k)}$ is a feasible solution to the MMST-HC problem. Hence, 
\begin{equation}
\label{eqn:notless}
w_{opt}\:\nless \overline{w}^{(k)}
\end{equation}
Combining \eqref{eqn:leq} and \eqref{eqn:notless}, it follows that $w_{opt}=\overline{w}^{(k)}$. 
\end{proof}

\section*{Acknowledgements}
This work was supported by a research grant from the Department of Electronics and Information Technology (DeitY), Government of India, through the Automation Systems Technology (ASTEC) program, and by a DeitY-NSF funded Indo-US project on Wireless Sensor Networks for Protecting Wildlife and Humans.

\bibliography{journal_paper}
\end{document}